\numberwithin{equation}{section}
\long\def \beq#1\eeq {\begin{equation} #1 \end{equation}}
\long\def \beaq#1\eeaq {\begin{equation}\begin{aligned} #1 \end{aligned}\end{equation}}
\long\def \bes#1\ees {\begin{equation}\begin{split} #1 \end{split} \end{equation}}
\long\def \bea#1\eea {\begin{eqnarray} #1 \end{eqnarray}}
\long\def \bse[#1]#2\ese {\begin{subequations}\label{#1}\begin{align} #2 \end{align}\end{subequations}}
\newcommand{\evalat}[2]{\left. #1 \right\vert_{#2}}
\newcommand{\mv}[1]{\langle #1\rangle}
\newcommand{\bt}{\tilde{t}}
\newcommand{\mvt}[1]{\langle #1\rangle_t}
\newcommand{\pder}[2]{\ensuremath{\frac{\partial #1}{\partial #2}}}
\newcommand{\fder}[2]{\ensuremath{\frac{\delta #1}{\delta #2}}}
\long\def\dm[#1]{\!\operatorname{d\mu}\left(#1\right)}
\newcommand{\RS}{{\rm \scriptscriptstyle RS}}
\theoremstyle{plain}
\newtheorem{Remark}{Remark}
\newtheorem{Theorem}{Theorem}
\newtheorem{Lemma}{Lemma}
\newtheorem{Proposition}{Proposition}
\newtheorem{Corollary}{Corollary}
\newtheorem{Definition}{Definition}
\title{Generalized Guerra's interpolation schemes for dense associative neural networks}
\author[a]{Elena Agliari}
\author[b,c]{Francesco Alemanno}
\author[b,d]{Adriano Barra}
\author[b,d]{Alberto Fachechi}
\affiliation[a]{Dipartimento di Matematica  {\em Guido Castelnuovo}, Italy}
\affiliation[b]{Dipartimento di Matematica e Fisica {\em Ennio De Giorgi}, Universit\`a del Salento, Italy}
\affiliation[c]{C.N.R. Nanotec Lecce, Italy}
\affiliation[d]{I.N.F.N., Sezione di Lecce, Italy}
\abstract{In this work we develop analytical techniques to investigate a broad class of associative neural networks set in the high-storage regime. These techniques translate the original statistical-mechanical problem into an analytical-mechanical one which implies solving a set of partial differential equations, rather than tackling the canonical probabilistic route. We test the method on the classical Hopfield model -- where the cost function includes only two-body interactions (i.e., quadratic terms)
 -- and on the ``relativistic'' Hopfield model -- where the (expansion of the) cost function includes $p$-body (i.e., of degree $p$) contributions. 
Under the replica symmetric assumption, we paint the phase diagrams  of these models by obtaining the explicit expression of their free energy as a function of the model parameters (i.e., noise level and memory storage).  Further, since for non-pairwise models ergodicity breaking is non necessarily a critical phenomenon, we develop a fluctuation analysis and find that criticality is preserved in the relativistic model.}
\begin{document}

\maketitle



\section{Introduction}
In the last few years, the introduction of a new-generation of computational devices \cite{GPU1,GPU2}, the development of better performing learning schemes \cite{Hinton1,DL1} and the implementation of massive data repositories allowing efficient training \cite{Clouds1,Cloud2} naturally raised the quest for appropriate mathematical tools and frameworks able to describe, address and possibly explain the related information processing capabilities \cite{DL1}.
\newline
Since the 80s', the statistical mechanics of spin glasses \cite{MPV} has been playing a primary role in the investigation of neural networks, as for both their learning skills \cite{angel-learning,sompo-learning} and their retrieval properties \cite{Amit,Coolen}. Along the past decades, beyond the bulk of results achieved via heuristic approaches like the replica trick \cite{Amit,MPV}, a considerable amount of rigorous results exploiting alternative routes was also developed (see e.g. \cite{Agliari-Barattolo,ABT,BovierBook,Bovier1,Bovier2,Bovier3,Albert2,Barra-JSP2010,bipartiti,Dotsenko1,Dotsenko2,Tala1,Tala2,Tirozzi,Pastur}).

This paper goes in the last direction and aims bridging between a statistical-mechanics approach and a mechanical approach, the latter possibly more familiar to a wider community.
In particular, we shall focus on the Hopfield model (as a reference framework) and its ``relativistic'' generalization \cite{Albert1}. The latter, exhibiting a cost function that is an (infinite and convergent) series of monomials in the microscopic variables (i.e., the neural activities) offers not only a perfect ``playground'' where testing our methods, but also an interesting example of dense architectures \cite{Krotov1,Krotov2}.
The translation of the statistical-mechanics problem into a mechanical framework is based on the analogy between the variational principles in statistical mechanics (e.g., maximum entropy and minimal energy) and the least-action principle in analytical mechanics: this route was already paved for ferromagnetic models \cite{Bogo1,Bogo2,Moro}, for spin glasses \cite{Barra0,BarraHJ,GuerraSum} and for (simpler) neural networks \cite{Agliari-Barattolo,Albert1}. A main advantage is that it allows painting the phase diagrams of the model under study by relying upon tools originally developed in the analytical counterpart (i.e. mainly differential equation techniques) almost without any knowledge of the statistical mechanics of complex systems.

We stress that in this work we deal with models in the high-storage regime, namely, we set the number $P$ of stored patterns to be retrieved scaling linearly with the number $N$ of neurons making up the network, i.e., the load $\alpha := \lim_{N \to \infty}P/N$ is non vanishing. This regime is notoriously difficult to handle rigorously, given the emergence of glassy phenomena impairing the retrieval capabilities of the system (see e.g., \cite{Amit,Coolen}). In fact, the ``relativistic'' Hopfield model has already been addressed in the low-storage regime, namely when $\alpha=0$, while
when $\alpha >0$ the control of the model was still unsatisfactory \cite{Albert1}. Here we overcome that limit by generalizing Guerra's interpolation technique in order to be able to work with a broad class of Hamiltonians, that are expressable as a smooth function $\mathcal F$ of the squared Mattis magnetization $\boldsymbol m$, that is, $H =  \mathcal{F}(\boldsymbol m^2) = \mathcal{F}(\sum_{\mu} m_{\mu}^2)$, where $m_{\mu}$ measures the retrieval of the $\mu$-th stored pattern ($\mu=1,...,P$). 
Remarkably, this general model also includes a subset of ``dense'' networks, where ``dense'' refers to the presence of $p$-wise ($p>2$) interactions among neurons. 

This paper is structured as follows: as a preamble, in Sec.~\ref{riscaldamento}, we solve the classical Hopfield model in the high-storage regime via the mechanical analogy and, by solving a set of differential equations, we recover the standard Amit-Gutfreund-Sompolinsky (AGS) quenched free-energy and related self-consistencies for the order parameters \cite{Amit}; while in the original statistical mechanical setting this result emerges through a rather tricky and lengthy procedure, here it is just an almost trivial exercise. In Sec.~\ref{definizioni} the ``relativistic'' Hopfield model in the high-storage regime is introduced and embedded in its statistical mechanical framework. In Sec.~\ref{momentoprimo}, by relying upon a generalization of the mechanical analogy, we calculate explicitly its quenched free energy (at the replica symmetric level, as standard) and we provide a first picture of the phase diagram of the model. Next, in Sec.~\ref{momentosecondo}, we study the fluctuations of the order parameters to inspect ergodicity breaking and its possible critical nature. By combining results from Secs.~\ref{momentoprimo} and \ref{momentosecondo} a consistent picture of the phase diagram of the relativistic model is finally obtained. Then, Sec.~\ref{conclusions} is left for conclusions and speculations on the broad applicability of these new mathematical techniques. Finally, in Appendix A we present the solution of the ``relativistic'' Hopfield model via replica trick.

\section{The classical Hopfield network via the mechanical analogy}\label{riscaldamento}

The Hopfield model is the prototype model for neural networks performing pattern recognition; soon after the seminal paper by J.J. Hopfield \cite{Hopfield}, the statistical-mechanics analysis by Amit, Gutfreund and Sompolinsky \cite{AGS1,AGS2} highlighted a very rich phenomenology which prompted an upsurge of interest among physicists and mathematicians. Admittedly, the statistical-mechanics approach is strongly grounded on frustrated systems with quenched disorder (i.e., {\em spin-glasses}), typically studied just in some fields of Theoretical Physics and Applied Mathematics. Alternative investigation routes, possibly based on more widespread frameworks, may allow for a deeper comprehension and may facilitate interdisciplinarity with a new wave of ideas and inspirations.
Hereafter we solve the Hopfield model by exploiting differential equation theory, standard in many transport problems. \newline In the next definitions we introduce the fundamental quantities we will deal with.
\begin{Definition} Set $\alpha \in \mathbb{R}^+$  and let $\boldsymbol \sigma \in \{- 1, +1\}^N$ be a configuration of $N$ binary neurons. Given $P=\alpha N$ random patterns $\{\boldsymbol \xi^{\mu}\}_{\mu=1,...,P}$, each made of $N$ digital entries identically and independently drawn with probability $P(\xi_i^{\mu}=+1)=P(\xi_i^{\mu}=-1)=1/2$, for $i=1,...,N$, the classical Hopfield cost-function (or ``Hamiltonian'' to preserve a physical jargon) is
	\beq
	H_N(\boldsymbol \sigma| \boldsymbol \xi) := -\frac{1}{2N}\sum_{\mu=1}^{P}\sum_{i,j=1}^{N,N}\xi_i^{\mu}\xi_j^{\mu}\sigma_i\sigma_j.
	\eeq
	\label{hop_hbare}
\end{Definition}
\begin{Definition}The partition function related to the cost-function (\ref{hop_hbare}) is given by
	\beq
	Z_N(\beta,\boldsymbol \xi) := \sum_{\{ \boldsymbol \sigma \}} \exp \left[ -\beta H_N(\boldsymbol \sigma | \boldsymbol \xi) \right]=\sum_{\{ \boldsymbol \sigma \}} \exp \left( \frac{\beta}{2N}\sum_{\mu=1}^{P}\sum_{i,j=1}^{N,N}\xi_i^{\mu}\xi_j^{\mu}\sigma_i\sigma_j\right),
	\eeq
	where $\beta \geq 0$ is a real number accounting for the fast noise (or ``inverse temperature'') in the network and such that for $\beta \to 0$ the probability distribution for the neural configuration is uniformly spread while for $\beta \to \infty$ it is sharply peaked at the minima of the cost function (\ref{hop_hbare}).
	\label{hop_BareZ}
\end{Definition}
\begin{Definition} \label{def:Bolt}
	The Boltzmann average induced by the partition function (\ref{hop_BareZ}) is denoted with $\omega_{\boldsymbol \xi}$ and, for an arbitrary observable $O(\boldsymbol \sigma)$, is defined as
	\begin{equation}
	\omega_{\boldsymbol \xi} (O) : = \frac{\sum_{\boldsymbol \sigma} O(\boldsymbol \sigma) e^{- \beta H_N(\boldsymbol \sigma| \boldsymbol \xi)}}{Z_N(\beta, \boldsymbol \xi)}.
	\end{equation}
	This can be further averaged over the realization of the $\boldsymbol \xi$'s, also referred to as quenched average and denoted as $\mathbb E$, to get
	\beq
	\langle O \rangle := \mathbb{E} \omega_{\boldsymbol \xi} (O).
	\eeq
	Further, we introduce the product state $\Omega_{s, _{\boldsymbol \xi}} = \omega_{\boldsymbol \xi}^{(1)} \times \omega_{\boldsymbol \xi}^{(2)} \times ... \times \omega_{\boldsymbol \xi}^{(s)}$ over $s$ replicas of the system, characterized by the same realization $\boldsymbol \xi$ of disorder. In the following, we shall use the product state over two replicas only, hence we shall neglect the index $s$ without ambiguity; also, to lighten the notation, we shall also omit the subscript $\boldsymbol \xi$ in $\omega_{\boldsymbol \xi}$ and in $\Omega_{\boldsymbol \xi}$. Thus, for an arbitrary observable $O(\boldsymbol \sigma^{(1)}, \boldsymbol \sigma^{(2)})$
	\beq
	\langle O \rangle := \mathbb{E} \Omega (O) = \mathbb{E}  \frac{\sum_{\boldsymbol \sigma} O(\boldsymbol \sigma^{(1)},\boldsymbol \sigma^{(2)}) e^{-\beta [H_N(\boldsymbol \sigma^{(1)}| \boldsymbol \xi) + H_N(\boldsymbol \sigma^{(2)}| \boldsymbol \xi)] }}{Z_N^2(\beta, \boldsymbol \xi)},
	\eeq
	where $\boldsymbol{\sigma}^{(1,2)}$ is the configuration pertaining to the replica labelled as $1,2$.
\end{Definition}
\begin{Definition}The intensive quenched pressure of the classical Hopfield model (\ref{hop_hbare}) is defined as
	\beq
	A_N(\alpha, \beta) := \frac{1}{N} \bbE \log Z_N(\beta, \boldsymbol \xi),
	\label{hop_BareA}
	\eeq
	and its thermodynamic limit is referred to as
	\beq
	A(\alpha, \beta) := \lim_{N \to \infty} A_N(\alpha, \beta).
	\eeq
\end{Definition}
\begin{Remark}
	The quenched pressure $A_N(\alpha, \beta)$ is related to the quenched free-energy $F_N(\alpha, \beta)$ as $A_N(\alpha, \beta) = - \beta F_N(\alpha, \beta)$ and here it is chosen for mathematical convenience.
\end{Remark}

Before proceeding it is worth recalling the {\em universality property} of the quenched noise in spin-glasses: when dealing with mean-field spin-glasses \cite{Univ1} and mean-filed bipartite spin-glasses \cite{Genovese}, the coupling distribution is proved not to affect the resulting pressure, provided that it is centered, symmetrical and with finite variance. This property was extended to the Hopfield model in \cite{Agliari-Barattolo}, where it was shown that the quenched noise contribution appearing in the expression for the pressure (which stems from the $P-1$ non-retrieved patterns and which tends to inhibit retrieval), exhibits the very same shape, regardless of the nature of the pattern entries (that is, digital -- e.g., Boolean -- or analog -- e.g., Gaussian).

Focusing on pure retrieval states, we will assume without loss of generality that the candidate pattern to be retrieved $\boldsymbol \xi^1$ is a Boolean vector of $N$ entries, while $\boldsymbol \xi^{\mu}$, $\mu=2,...,P$ are real vectors whose $N$ entries are i.i.d. standard Gaussian. Accordingly, the average $\bbE$ acts as a Boolean average over $\boldsymbol \xi^1$ and as a Gaussian average over $\boldsymbol \xi^2 \cdots \boldsymbol \xi^P$.

\begin{Definition} The order parameters used to describe the macroscopic behavior of the model are the standard ones \cite{Amit,Coolen}, namely, the Mattis magnetization 
	\begin{equation}
	m(\boldsymbol \sigma) :=m(\boldsymbol \sigma| \boldsymbol \xi) := \frac{1}{N}\sum_{i=1}^{N} \xi^1_1 \sigma_i
	\end{equation}
	to quantify the retrieval capabilities of the network, and the two-replica overlap in the $\boldsymbol \sigma$'s variables
	\beq
	\label{q}
	q_{12}(\boldsymbol \sigma) := \frac{1}{N}\sum_{i=1}^N \sigma_i^{(1)}\sigma_i^{(2)}
	\eeq
	to quantify the level of slow noise the network must cope with when performing pattern recognition.
	Further, as an additional set of variables $\{\tau_{\mu}\}_{\mu=1,...,P}$ shall be introduced (vide infra), we accordingly define the related two-replica overlap
	\beq
	\label{p}
	p_{11}(\boldsymbol \tau) :=  \frac{1}{P} \sum_{\mu=1}^P \tau^{(1)}_\mu \tau^{(1)}_\mu, ~~~ p_{12}(\boldsymbol \tau) :=  \frac{1}{P} \sum_{\mu=1}^P \tau^{(1)}_\mu \tau^{(2)}_\mu
	\eeq
	which as well captures the level of the noise due to pattern inference.
	\label{hop_orderparameters}
\end{Definition}

\begin{Definition} Given five real interpolating parameters $(t, x, y, z, w)$ to be set a posteriori and $N+P$ auxiliary quenched i.i.d. random variables $J_i \sim \mathcal{N}[0,1], i \in (1,...,N)$ and  $\tilde{J}_{\mu} \sim \mathcal{N}[0,1], \mu \in (1,...,P)$, the interpolating pressure -- also known as {\em Guerra's action} -- for the classical Hopfield model (\ref{hop_hbare}) in the high-storage regime is defined as
	\bes
	\mathcal{A}_N(t,x,y,z,w) &:= \frac{1}{N} \bbE \log \sum_{\{ \boldsymbol \sigma \} } \int \calD\tau \exp \Big[\frac{\sqrt{t}}{\sqrt{N}} \sum_{i,\mu >1}^{N,P}\xi_i^{\mu}\sigma_i\tau_\mu +\frac{t N}{2}m^2(\boldsymbol \sigma) + \sqrt{x} \sum_{i=1}^N J_i \sigma_i +\\
	&
	+ \sqrt{y} \sum_{\mu=1}^P \tilde J_\mu \tau_\mu + z \sum_{\mu=1}^P \frac{\tau_\mu^2}{2}+w N m(\boldsymbol \sigma)\Big],
	\label{hop_GuerraAction}
	\ees
	where $\calD \tau := \prod_{\mu=1}^P \frac{e^{-\tau_{\mu}^2/2}}{\sqrt{2\pi}}$ 
	is the standard Gaussian measure over the $P$ real auxiliary variables $\tau_{\mu}, \ \ \mu \in (1,...,P)$, and the expectation $\mathbb E$ is now meant over $\boldsymbol \xi$, $\boldsymbol J$, and $\boldsymbol{\tilde{J}}$.
\end{Definition}
%
\begin{Remark}
	We interpret $\textbf{r} = (x,y,z,w)$ as the spatial coordinate in a four-dimensional Euclidean space and $t$ as the temporal variable. According to the mechanical analogy, we will consider the spatial coordinates as functions of $t$ (we will specify their form in the next theorem) and we require that $\textbf{r}(t=\beta) \overset{!}{=} 0$, in such a way that, when the temporal coordinate $t$ is equal to $\beta$, we recover the original pressure \eqref{hop_BareA}, namely $\mathcal A_N(t=\beta,\textbf{r} =0)=A_N(\alpha,\beta)$.
\end{Remark}
\begin{Definition}
	Since the variables $x,y,z,w$ are functions of $t$, we will refer to the Boltzmann average stemming from the interpolating system as
	$\omega_{t} (\cdot)$, that is,
	\begin{equation}
	\omega_t(\cdot)=\frac{\sum_{\boldsymbol \sigma}\int \calD \tau ~(\cdot)~ B_{t,\boldsymbol r}(\boldsymbol \sigma, \boldsymbol \tau)}{Z_{t,\boldsymbol r}},
	\end{equation}
	where $B_{t,\boldsymbol r}$ is the generalized Boltzmann factor
	\bea
	\nonumber
	B_{t,\boldsymbol r} (\boldsymbol \sigma, \boldsymbol \tau) :=  \exp \Big[\frac{\sqrt{t}}{\sqrt{N}} \sum_{i,\mu >1}^{N,P}\xi_i^{\mu}\sigma_i\tau_\mu +\frac{t N}{2}m^2(\boldsymbol \sigma) + \sqrt{x} \sum_i J_i \sigma_i + \sqrt{y} \sum_\mu \tilde J_\mu \tau_\mu + z \sum_\mu \frac{\tau_\mu^2}{2}+w N m(\boldsymbol \sigma)\Big],
	\eea
	and
	\begin{equation}
		Z_{t,\boldsymbol r}=\sum_{\boldsymbol \sigma}\int \calD \tau B_{t,\boldsymbol r}(\boldsymbol \sigma, \boldsymbol \tau).
	\end{equation}
	In analogy with the non-deformed average, we also pose
	\beq
		\mvt{\cdot} := \mathbb{E} \omega_{t} (\cdot),
		\eeq
		where now $\mathbb E$ means also the average over the $\boldsymbol{J}$ and $\boldsymbol {\tilde{J}}$ variables.
\end{Definition}
\begin{Lemma} The partial derivatives of the Guerra Action (\ref{hop_GuerraAction}) w.r.t. $t,x,y,z,w$ can be expressed in terms of the generalized expectations of the order parameters as
	\bea
	\label{hop_expvalsa}
	\frac{\partial \mathcal{A}_N}{\partial t} &=& \frac{\alpha}{2} \big[ \mvt{p_{11}}- \mvt{p_{12}q_{12}}\big] + \frac{1}{2} \mvt{m^2}\\
	\label{hop_expvalsmiddle}
	\frac{\partial \mathcal{A}_N}{\partial x}  &=& \frac{1}{2}  \big[1- \mvt{q_{12}}\big],\\
	\frac{\partial \mathcal{A}_N}{\partial y}  &=& \frac{\alpha}{2} \big[\mvt{p_{11}}-\mvt{p_{12}}\big],\\
	\frac{\partial \mathcal{A}_N}{\partial z}  &=& \frac{\alpha}{2}  \mvt{p_{11}},\\
	\frac{\partial \mathcal{A}_N}{\partial w}  &=& \mvt{m},
	\label{hop_expvalsb}
	\eea
	
	\begin{proof}
		The previous equations can be obtained by straightforward computations. We provide details for (\ref{hop_expvalsa}) only, the other derivations work analogously. Deriving $\mathcal{A}_N(t,\boldsymbol r)$ with respect to $t$ we get
		\bea
		\nonumber
		\frac{\partial \mathcal{A}_N}{\partial t} &=& \frac{1}{N} \mathbb{E} \frac{ \sum_{\boldsymbol \sigma} \int \calD\tau \left [ \frac{1}{2\sqrt{tN}} \sum_{i,\mu >1}^{N,P}\xi_i^{\mu}\sigma_i\tau_\mu +\frac{N}{2}m^2(\boldsymbol \sigma) \right ] B_{t,\boldsymbol r}(\boldsymbol \sigma, \boldsymbol \tau)  }{Z_{t,\boldsymbol r}}= \\
		\label{pp}
		&=& \frac{1}{2 N \sqrt{tN} } \mathbb{E} \frac{ \sum_{\boldsymbol \sigma} \int \calD\tau  \sum_{i,\mu >1}^{N,P}\xi_i^{\mu}\sigma_i\tau_\mu  B_{t,\boldsymbol r} (\boldsymbol \sigma, \boldsymbol \tau) }{Z_{t,\boldsymbol r}} + \frac{1}{2} \mvt{m^2}.
		\eea
		Now, we still need to handle the first term in the right-hand-side of eq.~(\ref{pp}); by applying Wick's theorem $\mathbb{E} [\xi_i^{\mu} f(\boldsymbol \xi)] = \mathbb{E} \partial_{\xi_i^{\mu}} f(\boldsymbol \xi)$ to that term, we get
		\bea
		\nonumber
		\frac{\partial \mathcal{A}_N}{\partial t} &=&  \frac{1}{2 N ^2 } \sum_{i,\mu >1}^{N,P}\big(\mvt{\tau_\mu^2}-\mvt{\sigma_i \tau_\mu}^2\big) + \frac{1}{2} \mvt{m^2}=\\
		&=&\frac\alpha2\big(\mvt{p_{11}}-\mvt{p_{12}q_{12}}\big)+ \frac{1}{2} \mvt{m^2}
		\eea		
		where we used the definitions \eqref{q} and \eqref{p} and replaced $P-1$ with $\alpha N$ (which is valid in the thermodynamic limit). 
	\end{proof}
\end{Lemma}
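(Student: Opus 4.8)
The plan is to treat the five identities uniformly: differentiate the logarithm in \eqref{hop_GuerraAction} with respect to the relevant parameter and use that each derivative brings down, inside the interpolating Boltzmann average $\omega_t$, exactly the term of the exponent on which it acts, so that $\partial_\bullet \mathcal{A}_N=\frac{1}{N}\,\mathbb{E}\,\omega_t(\partial_\bullet \log B_{t,\boldsymbol r})$. The identities for $z$ and $w$ are then immediate and need no integration by parts, because the associated terms carry no quenched randomness through which $\mathbb{E}$ must be pushed: the $z$-derivative equals $\frac{1}{N}\mathbb{E}\,\omega_t\!\big(\frac12\sum_\mu \tau_\mu^2\big)$, which by the definition \eqref{p} of $p_{11}$ and the replacement $P/N\to\alpha$ gives $\frac{\alpha}{2}\mvt{p_{11}}$, while the $w$-derivative equals $\frac{1}{N}\mathbb{E}\,\omega_t(Nm)=\mvt{m}$.

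The remaining three identities are the substantive ones and all proceed through Gaussian integration by parts (Wick's theorem) applied to the random field multiplying the brought-down term. The $t$-identity \eqref{hop_expvalsa} is the archetype: Wick's theorem $\mathbb{E}[\xi_i^\mu F]=\mathbb{E}[\partial_{\xi_i^\mu}F]$ turns the noise term into the connected correlator $\omega_t(\tau_\mu^2)-\omega_t(\sigma_i\tau_\mu)^2$ (using $\sigma_i^2=1$), which after summing over $i,\mu$ and setting $P-1\simeq\alpha N$ reconstructs $\frac{\alpha}{2}\big(\mvt{p_{11}}-\mvt{p_{12}q_{12}}\big)$, the direct $\frac{tN}{2}m^2$ term contributing the separate $\frac12\mvt{m^2}$. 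For $x$ I would write $\partial_x\mathcal{A}_N=\frac{1}{2N\sqrt{x}}\sum_i\mathbb{E}\big[J_i\,\omega_t(\sigma_i)\big]$ and apply $\mathbb{E}[J_iF]=\mathbb{E}[\partial_{J_i}F]$; differentiating the whole state gives $\partial_{J_i}\omega_t(\sigma_i)=\sqrt{x}\big(\omega_t(\sigma_i^2)-\omega_t(\sigma_i)^2\big)$, and with $\sigma_i^2=1$ the sum collapses, via \eqref{q}, to $\frac12\big(1-\mvt{q_{12}}\big)$. The $y$-identity is identical with $(\tilde J_\mu,\tau_\mu)$ in place of $(J_i,\sigma_i)$, the sole difference being $\tau_\mu^2\neq1$, so the diagonal term survives as $\mvt{p_{11}}$ and the squared term as $\mvt{p_{12}}$, yielding $\frac{\alpha}{2}\big(\mvt{p_{11}}-\mvt{p_{12}}\big)$ after the prefactor $P/N\to\alpha$.

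The main delicacy is the bookkeeping in the integration by parts: one must differentiate the \emph{full} quotient defining $\omega_t(\cdot)$ and not merely the exponent, so that the normalization $Z_{t,\boldsymbol r}$ supplies the subtracted square and the result is manifestly a connected, hence genuinely two-replica, correlator of the order parameters. The secondary points to check are the legitimacy of interchanging $\mathbb{E}$ with the $\omega_t$-differentiation and the consistent replacement of $P$ (or $P-1$ in the $t$-case) by $\alpha N$, valid in the thermodynamic limit; keeping this substitution uniform across all five derivatives is exactly what ties the finite-$N$ computations to the order-parameter expressions claimed in the lemma.
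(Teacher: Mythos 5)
Your proposal is correct and follows essentially the same route as the paper: differentiate under the quenched expectation to bring down the corresponding term of the exponent inside $\omega_t$, then apply Gaussian integration by parts (Wick's theorem) to the random fields $\xi_i^{\mu}$, $J_i$, $\tilde J_\mu$ in the $t$-, $x$- and $y$-cases, identify the resulting connected correlators with the two-replica overlaps via \eqref{q} and \eqref{p}, and replace $P$ (or $P-1$) by $\alpha N$. The paper only details the $t$-derivative and declares the rest analogous; your write-up simply makes those analogous computations explicit, with no substantive difference in method.
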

\begin{Definition}
	Under the replica-symmetry (RS) assumption, in the thermodynamic limit, the distribution of the generic order parameter $X$ is centered at its expectation value $\bar X(t)$ with vanishing fluctuations for all $t$, that is, it converges to a Dirac delta: $\lim _{N\to \infty} P_t(X)=\delta(X-\bar X(t))$. Otherwise stated, being
		\beq \label{eq:delta}
	\Delta X :=  X - \bar{X}(t),
	\eeq
	for all $t \in \bbR^+$ we have $\langle \Delta X(t) \rangle_t \overset{N\to\infty}{\longrightarrow}0$ and $\mvt{\Delta X \Delta Y} = 0$ for any generic pair of order parameters $X,Y$. We therefore define the following expectation values
	\bea
	\bar m(t) &=&\lim_{N\to \infty} \mv{m}_t,\\
	\bar q(t) &=&\lim_{N\to \infty}  \mv{q_{12}}_t, \\
	\bar p(t) &=&\lim_{N\to \infty} \mv{p_{12}}_t.
	\eea
\end{Definition}
%
\begin{Proposition} The Guerra Action \eqref{hop_GuerraAction} obeys the following differential equation:
	{\small\beq
	\pder{\mathcal{A}_N}{t}-\alpha \bar p(t) \pder{\mathcal{A}_N}{x}-\bar q(t) \pder{\mathcal{A}_N}{y}-\Big(1-\bar q(t)\Big)\pder{\mathcal{A}_N}{z} -\bar m(t) \pder{\mathcal{A}_N}{w}=-\frac{\alpha}{2}\bar p(t)\Big(1-\bar q(t)\Big) - \frac{1}{2}\bar m^2(t)+V_N(t),
	\label{hop_GuerraAction_DE}
	\eeq
		}
	where
	\beq
	V_N(t)=\frac{1}{2}\mvt{(\Delta m)^2}-\frac{1}{2}\mvt{\Delta p_{12}\Delta q_{12}},
	\eeq
and, in the thermodynamic limit, under the RS assumption, one has
	\beq
	\pder{\calA}{t}-\alpha \bar p(t) \pder{\calA}{x}-\bar q(t) \pder{\calA}{y}-\Big(1-\bar q(t)\Big)\pder{\calA}{z} -\bar m(t) \pder{\calA}{w}=-\frac{\alpha}{2}\bar p(t)\Big(1-\bar q(t)\Big) - \frac{1}{2}\bar m^2(t).
	\label{hop_GuerraAction_RSDE}
	\eeq
	\begin{proof}
		First, we write the generalized averages $\langle \cdot \rangle_t$ appearing in \eqref{hop_expvalsa} in terms of the $\Delta$ operator (\ref{eq:delta}) as
		\bea
		\mvt{m^2} &=& -\bar m^2(t) +2 \bar m \mvt{m} + \mvt{(\Delta  m)^2},\\
		\mvt{p_{12}q_{12}} &=& -\bar p (t) \bar q(t) +\bar p(t) \mvt{q_{12}}+\bar q(t) \mvt{p_{12}} + \mvt{\Delta p_{12}\Delta q_{12}},
		\eea
		and, by plugging these espressions into \eqref{hop_expvalsa}, we get
		\bes
		\pder{A_N}{t}=\frac{\alpha}{2}\mvt{p_{11}}-\frac{\alpha}{2}\Big(-\bar p (t) \bar q(t) +\bar p(t) \mvt{q_{12}}+\bar q(t) \mvt{p_{12}}\Big) + \frac{1}{2} \Big(\bar m^2(t) +2 \bar m \mvt{m} \Big)+V_N(t).
		\ees
		Then, the generalized averages appearing in the previous expression can be recast in terms of the spatial derivatives of the Guerra Action $\mathcal A _N$ using the relations \eqref{hop_expvalsmiddle}-\eqref{hop_expvalsb}; after some trivial computations we get \eqref{hop_GuerraAction_DE}. \par
		Now, under the RS assumption, the averages $\mvt{\Delta X}$ and $\mvt{\Delta X \Delta Y}$ for any generic pair of order parameters $X,Y$ and all $t \in \bbR^+$ vanish in the thermodynamic limit. As a consequence, we have $\lim _{N\to \infty }V_N(t)=0$, and \eqref{hop_GuerraAction_DE} reduces to \eqref{hop_GuerraAction_RSDE}.
	\end{proof}
\end{Proposition}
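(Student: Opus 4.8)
The plan is to start from the single known identity for the time-derivative of the action---eq.~\eqref{hop_expvalsa} of the Lemma---and to reorganize its right-hand side so that every generalized average reappears either as a spatial derivative of $\mathcal{A}_N$ or inside a manifestly fluctuating remainder. The only inputs are the five derivative formulas \eqref{hop_expvalsa}--\eqref{hop_expvalsb}, the elementary centering identities built from $\Delta X = X - \bar X(t)$, and, for the second statement, the RS hypothesis that centered fluctuations vanish as $N\to\infty$.

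First I would center the two nonlinear observables appearing in \eqref{hop_expvalsa}. Expanding $(\Delta m)^2 = m^2 - 2\bar m\,m + \bar m^2$ and taking $\mvt{\cdot}$ gives
\[
\mvt{m^2} = -\bar m^2(t) + 2\bar m(t)\mvt{m} + \mvt{(\Delta m)^2},
\]
while the bilinear expansion $\Delta p_{12}\Delta q_{12} = (p_{12}-\bar p)(q_{12}-\bar q)$ gives
\[
\mvt{p_{12}q_{12}} = -\bar p(t)\bar q(t) + \bar p(t)\mvt{q_{12}} + \bar q(t)\mvt{p_{12}} + \mvt{\Delta p_{12}\Delta q_{12}}.
\]
Substituting both into \eqref{hop_expvalsa} isolates the two fluctuation contributions into the remainder $V_N(t)$ and leaves behind only the linear averages $\mvt{m},\mvt{q_{12}},\mvt{p_{11}},\mvt{p_{12}}$ multiplied by the mean values $\bar m,\bar q,\bar p$.

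Next I would trade those linear averages for spatial derivatives using the remaining Lemma relations: \eqref{hop_expvalsb} gives $\mvt{m}=\partial_w\mathcal{A}_N$; \eqref{hop_expvalsmiddle} gives $\mvt{q_{12}}=1-2\,\partial_x\mathcal{A}_N$; the $z$-equation gives $\mvt{p_{11}}=(2/\alpha)\,\partial_z\mathcal{A}_N$; and the $y$-equation then gives $\mvt{p_{12}}=\mvt{p_{11}}-(2/\alpha)\,\partial_y\mathcal{A}_N$. Inserting these and regrouping, the coefficient of $\partial_z\mathcal{A}_N$ assembles as $1-\bar q(t)$---the $+1$ coming from $\tfrac{\alpha}{2}\mvt{p_{11}}$ and the $-\bar q$ from the $\bar q\,\mvt{p_{12}}$ piece---while the purely algebraic part collapses to $-\tfrac{\alpha}{2}\bar p(1-\bar q)-\tfrac12\bar m^2$. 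This reproduces the transport equation \eqref{hop_GuerraAction_DE} verbatim; the step is pure bookkeeping, so the only care needed is tracking signs when the $-\tfrac12\bar m^2$ and $-\tfrac{\alpha}{2}\bar p$ terms are moved to the right-hand side.

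Finally, for the thermodynamic-limit claim I would simply invoke the RS definition: since both $\mvt{\Delta X}$ and $\mvt{\Delta X\Delta Y}$ vanish as $N\to\infty$ for every order parameter, each term defining $V_N(t)$ vanishes, so $V_N(t)\to 0$ and \eqref{hop_GuerraAction_DE} degenerates to the linear first-order PDE \eqref{hop_GuerraAction_RSDE}. I do not expect a genuine analytic obstacle, since the vanishing of fluctuations is postulated rather than established; the only delicate point is conceptual, namely recognizing $V_N(t)$ as the \emph{exact} source term obstructing the linear equation, so that the RS free energy obtained by solving \eqref{hop_GuerraAction_RSDE} is correct precisely up to the fluctuation corrections that the later sections are devoted to controlling.
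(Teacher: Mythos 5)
Your proposal is correct and follows essentially the same route as the paper's own proof: center $m^2$ and $p_{12}q_{12}$ via the $\Delta$ operator, absorb the second moments of the fluctuations into $V_N(t)$, invert the Lemma's derivative relations to trade the linear averages $\mvt{m},\mvt{q_{12}},\mvt{p_{11}},\mvt{p_{12}}$ for $\partial_w,\partial_x,\partial_z,\partial_y$ of $\mathcal{A}_N$, and invoke the RS hypothesis to kill $V_N$ in the thermodynamic limit. Your explicit inversions and the regrouping of the $\partial_z$ coefficient into $1-\bar q(t)$ are exactly the ``trivial computations'' the paper leaves implicit, and carrying them out confirms the transport equation (the only minor point worth flagging is that the bookkeeping actually yields $-\tfrac{\alpha}{2}\mvt{\Delta p_{12}\Delta q_{12}}$ as the second piece of the source term, a harmless discrepancy with the paper's displayed $V_N$ that does not affect the RS conclusion).
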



\begin{Proposition}
	The unique solution to the differential equation \eqref{hop_GuerraAction_RSDE} is given by the functional
	\bes
	\calA^{(t,\boldsymbol r)}_{[\bar m,\bar p,\bar q]}=&\log 2 + \bbE_J \log\cosh\Big[w+ \int_0^t \bar m(\tilde t)\,\textrm{d} \tilde t +J\sqrt{x+\alpha \int_0^t \bar p (\tilde t)\,\textrm{d} \tilde t}\Big]+\\
	&+\frac{\alpha}{2}\frac{y+\int_0^t \bar q(\tilde t) \, \textrm{d} \tilde t}{1-z-\int_0^t [1-\bar q( \tilde t)]\,\textrm{d} \tilde t} -\frac{\alpha}{2}\log\Big(1-z-\int_0^t [1-\bar q(\tilde t)]\,\textrm{d} \tilde t\Big)+\\
	&-\frac{1}{2}\int_0^t \textrm{d} \tilde t\,[\alpha	\bar p(\tilde t)\big(1-\bar q(\tilde t)\big) +\bar m^2(\tilde t) ],
	\label{hop_mechanicalsolution}
	\ees
	which is a function of the parameters $t,x,y,z,w$ and a functional of the expectactions of the order parameters $(\bar m,\bar p,\bar q)(t)$.
	\begin{proof}
		The differential equation in \eqref{hop_GuerraAction_RSDE} can be solved with different methods. The simplest is perhaps the method of characteristics, whose key step is to find the ``characteristic curves'' of the PDE along which the PDE turns into an ODE. In this case, the characteristic curves are expressed in terms of the expectaction values of the order parameters as
		\bea
		x(t) &=& x_0 -\alpha\int_{t_0}^t \bar p (\bt) d\bt,\label{eq:char1}\\
		y(t) &=& y_0 -\int_{t_0}^t \bar q (\bt) d\bt,\\
		z(t) &=& z_0 -\int_{t_0}^t [1-\bar q (\bt)] d\bt,\\
		w(t) &=& w_0 -\int_{t_0}^t \bar m (\bt) d\bt\label{eq:charlast},
		\eea
		where $(x_0, y_0, z_0, w_0) = \boldsymbol r (t=t_0) = \boldsymbol{r_0}$.
		We can easily verify that, along these curves, $\frac{\rm d}{\textrm{d} t}\calA (t,\boldsymbol r (t))$ gives us exactly the l.h.s. of \eqref{hop_GuerraAction_RSDE}. Therefore, we can write
		\bea
		\frac{\rm d}{\textrm{d} t}\calA (t,\boldsymbol r (t))=-\frac{\alpha}{2}\bar p(t)\Big(1-\bar q(t)\Big) - \frac{1}{2}\bar m^2(t).
		\eea
		By integrating between $t_0$ and $t$ the previous equation we get
		\beq \label{eq:238}
		\calA (t,\boldsymbol r (t))=\calA (t_0,\boldsymbol r_0)-\frac{1}{2}\int_{t_0}^t d\bt \Big[\alpha\bar p(\tilde t)\Big(1-\bar q( \tilde t)\Big) + \bar m^2(\tilde t)\Big].
		\eeq
		Here, the parameter $t_0$ is free and can be chosen in order to simplify the computations. By inspecting \eqref{hop_GuerraAction}, we see that the most convenient choice is $t_0 = 0$, leaving us with the evaluation of a 1-body problem which can be directly solved:
		\bes \label{eq:239}
		\calA(t_0=0,\boldsymbol r _0)=&\log 2 + \bbE_J \log\cosh\Big[w_0 +J\sqrt{x_0}\Big]+\frac{\alpha}{2}\frac{y_0}{1-z_0} -\frac{\alpha}{2}\log\Big(1-z_0\Big).
		\ees
		Now, from (\ref{eq:char1})-(\ref{eq:charlast}) we get $x_0 = x(t) + \alpha \int_0^t \bar{p}(\tilde t) d \tilde t$, and similarly for $y_0, w_0, z_0$. These expressions can then be plugged in (\ref{eq:239}) and the resulting formula can be used in (\ref{eq:238}), finally obtaining (\ref{hop_mechanicalsolution}).
	\end{proof}
\end{Proposition}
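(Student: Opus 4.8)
The plan is to treat \eqref{hop_GuerraAction_RSDE} as a first-order linear transport equation for $\mathcal{A}$ in the five variables $(t,x,y,z,w)$ and to solve it by the method of characteristics. The key structural observation is that the advection coefficients multiplying the spatial derivatives, namely $\alpha\bar p(t)$, $\bar q(t)$, $1-\bar q(t)$ and $\bar m(t)$, depend only on the time variable $t$, the functions $\bar m,\bar p,\bar q$ having been frozen once and for all by the RS prescription. Hence the characteristic flow is a purely time-dependent translation in $(x,y,z,w)$-space, which can be integrated in closed form; this is what makes the scheme go through without any implicit-function complications and, as I explain below, is also what yields uniqueness.

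Concretely, I would introduce the characteristic curves $\boldsymbol r(t)=(x(t),y(t),z(t),w(t))$ solving
\[ \dot{x}=-\alpha\bar p(t),\quad \dot{y}=-\bar q(t),\quad \dot{z}=-\bigl(1-\bar q(t)\bigr),\quad \dot{w}=-\bar m(t), \]
with $\boldsymbol r(t_0)=\boldsymbol r_0$. Along such a curve the chain rule gives $\frac{d}{dt}\mathcal{A}(t,\boldsymbol r(t))=\partial_t\mathcal{A}+\dot x\,\partial_x\mathcal{A}+\dot y\,\partial_y\mathcal{A}+\dot z\,\partial_z\mathcal{A}+\dot w\,\partial_w\mathcal{A}$, which is precisely the left-hand side of \eqref{hop_GuerraAction_RSDE}. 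The PDE therefore collapses to the ordinary differential equation $\frac{d}{dt}\mathcal{A}(t,\boldsymbol r(t))=-\tfrac{\alpha}{2}\bar p(t)\bigl(1-\bar q(t)\bigr)-\tfrac{1}{2}\bar m^2(t)$, which I integrate from $t_0$ to $t$ to obtain $\mathcal{A}(t,\boldsymbol r(t))=\mathcal{A}(t_0,\boldsymbol r_0)-\tfrac{1}{2}\int_{t_0}^{t}[\alpha\bar p(s)(1-\bar q(s))+\bar m^2(s)]\,ds$.

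It remains to fix the Cauchy datum, and the natural choice is $t_0=0$, because at $t=0$ the interpolating structure in \eqref{hop_GuerraAction} simplifies drastically: the coupling term $\tfrac{\sqrt t}{\sqrt N}\sum\xi_i^\mu\sigma_i\tau_\mu$ and the $\tfrac{tN}{2}m^2$ term both vanish, so the $\boldsymbol\sigma$ and $\boldsymbol\tau$ sectors decouple into two independent one-body problems. The spin sum factorizes over sites and yields $\log 2+\mathbb{E}_J\log\cosh(w_0+J\sqrt{x_0})$, using that $\xi^1_i=\pm1$ can be absorbed into the symmetric Gaussian $J$, while the Gaussian integral over each $\tau_\mu$, whose effective variance is shifted to $1-z_0$ by the $z_0\tau_\mu^2/2$ term, produces $\tfrac{\alpha}{2}\tfrac{y_0}{1-z_0}-\tfrac{\alpha}{2}\log(1-z_0)$ after averaging over $\tilde J$ and using $P/N\to\alpha$. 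Finally I invert the characteristics, which by their explicit form give $x_0=x+\alpha\int_0^t\bar p(s)\,ds$ and analogously for $y_0,z_0,w_0$, and substitute these initial coordinates into the Cauchy datum and into the source integral, arriving at \eqref{hop_mechanicalsolution}.

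Uniqueness follows from the same structural feature: since the velocity field depends on $t$ alone, the characteristics are non-intersecting and foliate the whole $(t,\boldsymbol r)$ strip, so the value of any solution at an arbitrary point is rigidly transported from its value on the hyperplane $\{t=0\}$; as that value is pinned by the explicit definition \eqref{hop_GuerraAction}, the solution is unique. I expect the only genuinely delicate step to be the evaluation of the Cauchy datum at $t=0$, in particular the Gaussian $\tau$-integral that generates the $1-z_0$ denominator and the $-\tfrac{\alpha}{2}\log(1-z_0)$ contribution, together with the bookkeeping needed when re-expressing $\boldsymbol r_0$ through $\boldsymbol r(t)$ in both the one-body term and the source integral; one should also keep track of the admissibility condition $z_0+\int_0^t[1-\bar q(s)]\,ds<1$ that guarantees convergence of that integral. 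The transport part itself is essentially automatic.
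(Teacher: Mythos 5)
Your proposal is correct and follows essentially the same route as the paper: the same characteristic curves $\dot{\boldsymbol r}=-(\alpha\bar p,\bar q,1-\bar q,\bar m)$, the same reduction of the PDE to an ODE along them, the same choice $t_0=0$ for the one-body Cauchy datum, and the same inversion of the characteristics to re-express $\boldsymbol r_0$ in terms of $(t,\boldsymbol r)$. The extra remarks you add — the explicit factorization of the $t=0$ problem into the spin and $\tau$ sectors, the uniqueness argument from the foliation by non-intersecting characteristics, and the admissibility condition on $z_0$ for convergence of the Gaussian integral — are all consistent with (and slightly more detailed than) the paper's own proof.
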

\begin{Theorem}The replica-symmetry quenched pressure of the Hopfield model \eqref{hop_BareA} in the infinite volume limit is given by
	\bes
	A_{\rm RS}(\alpha,\beta)=&\log 2 + \bbE_J \log\cosh\Big[\int_0^\beta \bar m(t)\,\textrm{d}t +J\sqrt{\alpha \int_0^\beta \bar p (t)\,\textrm{d}t}\Big]+\frac{\alpha}{2}\frac{\int_0^\beta \bar q(t) \, \textrm{d}t}{1-\int_0^\beta [1-\bar q(t)]\,\textrm{d}t} \\
	-& \frac{\alpha}{2}\log\Big(1-\int_0^\beta [1-\bar q(t)]\,\textrm{d}t\Big)-\frac{1}{2}\int_0^\beta \textrm{d}t\,[\alpha	\bar p(t)\big(1-\bar q(t)\big) +\bar m^2(t) ].
	\label{hop_solution}
	\ees
	The self-consistency equations obtained from the quenched pressure \eqref{hop_solution} for $t' \in [0, \beta]$ are:
	\bes
	\bar m(t') &=\bbE_J \tanh\Big[\int_0^\beta \bar m(t)\,\textrm{d}t +J\sqrt{\alpha \int_0^\beta \bar p (t)\,\textrm{d}t}\Big],\\
	\bar q(t') &=\bbE_J \tanh^2\Big[\int_0^\beta \bar m(t)\,\textrm{d}t +J\sqrt{\alpha \int_0^\beta \bar p (t)\,\textrm{d}t}\Big],\\
	\bar p(t') &= \frac{\int_0^\beta \bar q(t) \, \textrm{d}t}{\Big(1-\int_0^\beta [1-\bar q(t)]\,\textrm{d}t\Big)^2}.
	\label{hop_functionalSCE}
	\ees
	\begin{proof}
		As anticipated, by setting the interpolating parameters as $t=\beta$ and $x,y,z,w=0$, we recover the Hopfield picture, thus we get \eqref{hop_solution} by evaluating \eqref{hop_mechanicalsolution} in that point. The derivation of the self-consistency equations is based on the extremization of the functional \eqref{hop_solution} with respect to functional variations of the trajectories $m(t), q(t), p(t)$. Therefore, we must set to zero the functional derivatives of \eqref{hop_solution} w.r.t. all the order parameters appearing as functions of the temporal variable $t$. The self-consistency equation for $\bar m(t)$ can be found in the following way:
		\bes
		\fder{A_{\rm RS}}{\bar m(t')}&=\bbE_J \tanh\Big[\int_0^\beta \bar m(t)\,\textrm{d}t +J\sqrt{\alpha \int_0^\beta \bar p (t)\,\textrm{d}t}\Big]\int_0^\beta \fder{\bar m(t)}{\bar m(t')}\,\textrm{d}t-\int_0^\beta \bar m(t)\fder{\bar m(t)}{\bar m(t')}\,\textrm{d}t=\\
		&=\bbE_J \tanh\Big[\int_0^\beta \bar m(t)\,\textrm{d}t +J\sqrt{\alpha \int_0^\beta \bar p (t)\,\textrm{d}t}\Big]\int_0^\beta \delta(t-t')\,\textrm{d}t-\int_0^\beta \bar m(t)\delta(t-t')\,\textrm{d}t=\\
		&=\bbE_J \tanh\Big[\int_0^\beta \bar m(t)\,\textrm{d}t +J\sqrt{\alpha \int_0^\beta \bar p (t)\,\textrm{d}t}\Big]-\bar m(t')=0,
		\ees
		thus we have
		\beq
		\bar m(t')=\bbE_J \tanh\Big[\int_0^\beta \bar m(t)\,\textrm{d}t +J\sqrt{\alpha \int_0^\beta \bar p (t)\,\textrm{d}t}\Big].
		\label{hop_magnSCEfunctionalform}
		\eeq
		The remaining self-consistency equations can be obtained by following the same lines and are uniquely determined. On the other hand, as well known, the self-consistent equations can display multiple solutions corresponding to neural configurations which are not necessarily stable as they may possibly be minima for the pressure.
	\end{proof}
\end{Theorem}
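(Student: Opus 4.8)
The plan is to treat the two assertions separately. The expression \eqref{hop_solution} for the RS pressure requires no new work: the preceding Proposition already provides the complete solution \eqref{hop_mechanicalsolution} of the transport equation along the characteristic flow, valid for arbitrary trajectories $(\bar m,\bar p,\bar q)$ and arbitrary boundary data. By the dictionary fixed in the Remark requiring $\boldsymbol{r}(t=\beta)=0$, the original Hopfield pressure is recovered precisely at $t=\beta$ with $\boldsymbol{r}=(x,y,z,w)=0$. I would therefore set $t=\beta$ and $x=y=z=w=0$ in \eqref{hop_mechanicalsolution}: every linear-in-$\boldsymbol{r}$ term drops, each $\int_0^t$ becomes $\int_0^\beta$, and \eqref{hop_solution} follows at once. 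The only thing to bear in mind is that $\bar m,\bar p,\bar q$ here denote the genuine thermodynamic trajectories, so the formula is exact under RS but remains implicit until these are pinned down.

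The self-consistency equations \eqref{hop_functionalSCE} close the system, and I would obtain them by imposing stationarity of the functional \eqref{hop_solution} with respect to each trajectory, i.e. $\delta A_{\rm RS}/\delta \bar m(t')=\delta A_{\rm RS}/\delta \bar p(t')=\delta A_{\rm RS}/\delta \bar q(t')=0$ for all $t'\in[0,\beta]$. The single calculus ingredient is $\delta \bar X(t)/\delta \bar X(t')=\delta(t-t')$, which collapses each time integral into an evaluation at $t'$. Varying $\bar m$ is direct: it enters only the argument of the $\cosh$ and the tail $-\tfrac{1}{2}\int_0^\beta \bar m^2$, so $\mathbb{E}_J\tanh[\,\cdot\,]-\bar m(t')=0$ reproduces the first line. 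The remaining two display a cross-structure: the equation for $\bar q$ comes from varying $\bar p$, and the equation for $\bar p$ from varying $\bar q$. Varying $\bar p$ hits the $\cosh$ through $\sqrt{\alpha\int_0^\beta \bar p}$ and the explicit term $-\tfrac{\alpha}{2}\int_0^\beta \bar p(1-\bar q)$; the chain rule on the square root leaves a stray $1/\sqrt{\alpha\int_0^\beta\bar p}$ together with a factor $J$ inside the Gaussian average, and a Gaussian integration by parts ($\mathbb{E}_J[Jg(J)]=\mathbb{E}_J[g'(J)]$) converts that $J$ into a compensating $\sqrt{\alpha\int_0^\beta\bar p}$ multiplying $1-\tanh^2$. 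The square roots cancel, leaving $\tfrac{\alpha}{2}(1-\mathbb{E}_J\tanh^2[\,\cdot\,])-\tfrac{\alpha}{2}(1-\bar q(t'))=0$, i.e. the second line $\bar q(t')=\mathbb{E}_J\tanh^2[\,\cdot\,]$. Finally, varying $\bar q$ acts on $\tfrac{\alpha}{2}\,Q/D$ and $-\tfrac{\alpha}{2}\log D$, with $Q:=\int_0^\beta\bar q$ and $D:=1-\int_0^\beta(1-\bar q)$, and on $+\tfrac{\alpha}{2}\int_0^\beta\bar p\,\bar q$; since $\delta Q/\delta\bar q(t')=\delta D/\delta\bar q(t')=1$, the $1/D$ pieces cancel and one is left with $-\tfrac{\alpha}{2}\,Q/D^2+\tfrac{\alpha}{2}\bar p(t')=0$, giving the third line $\bar p(t')=Q/D^2$. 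As a consistency check, each right-hand side is independent of $t'$, so the extremal trajectories are in fact constant on $[0,\beta]$.

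I expect the main obstacle to be conceptual rather than computational: justifying the extremization itself. The clean argument is that, under the RS ansatz, the fluctuation remainder $V_N(t)$ in \eqref{hop_GuerraAction_DE} vanishes as $N\to\infty$, so the pressure is insensitive to first-order variations of $\bar m,\bar p,\bar q$ about their true values; this is exactly the statement that the physical trajectories are a stationary point of \eqref{hop_solution}, which the variational equations encode. Two smaller technical caveats I would flag: the Gaussian integration by parts (used when varying $\bar p$) tacitly assumes $\int_0^\beta\bar p>0$ so that the square root is nonzero, and the degenerate branch $\bar p\equiv0$ must be checked separately; and when varying $\bar q$ one must carefully track the two places ($Q$ in the numerator and inside $D$) where $\bar q$ appears, since a mis-counting there spoils the cancellation. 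I would close, as the author does, by stressing that stationarity is only necessary: the equations may admit several solutions, and only those that actually extremize (rather than merely render stationary) the pressure correspond to physically stable states.
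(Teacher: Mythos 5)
Your proposal is correct and follows essentially the same route as the paper: evaluate the general solution \eqref{hop_mechanicalsolution} at $t=\beta$, $\boldsymbol r=0$, then impose stationarity of the resulting functional via $\delta \bar X(t)/\delta \bar X(t')=\delta(t-t')$. You actually go further than the paper's proof by carrying out the $\bar p$ and $\bar q$ variations explicitly (including the Gaussian integration by parts and the $1/D$ cancellation), which the paper dispatches with ``the remaining self-consistency equations can be obtained by following the same lines''; your computations check out.
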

\begin{Corollary} The replica-symmetry quenched pressure \eqref{hop_solution} returns the standard AGS theory:
	\bes
	A_{\rm RS}(\bar m,\bar p,\bar q)=&\log 2 + \bbE_J \log\cosh\Big[\beta \bar m+J\sqrt{\alpha \beta \bar p}\Big]-\frac{\beta}{2}[\alpha \bar p\big(1-\bar q\big) +\bar m^2 ]+\\
	&+\frac{\alpha}{2}\frac{\beta \bar q}{1-\beta [1-\bar q]} -\frac{\alpha}{2}\log\Big(1-\beta [1-\bar q]\Big).
	\label{hop_agssolution}
	\ees
	As a consequence, the self-consistency equations \eqref{hop_functionalSCE} recovers the standard AGS result:
	\bes
	\bar p&= \frac{\beta \bar q}{\Big(1-\beta(1-\bar q)\Big)^2},\\
	\bar m&=\bbE_J \tanh\Big[\beta \bar m+J\sqrt{\alpha\beta \bar p}\Big],\\
	\bar q&=\bbE_J \tanh^2\Big[\beta \bar m+J\sqrt{\alpha\beta \bar p}\Big].
	\label{hop_agsSCE}
	\ees
	\begin{proof}
		By inspecting \eqref{hop_functionalSCE}, we can see that the r.h.s.'s do not depend on the temporal coordinate $t'$. Thus, \eqref{hop_functionalSCE} actually predict constant functional order parameters, i.e.
		\beq
		\bar m(t) = \bar m, \quad \bar q(t) = \bar q, \quad \bar p(t) = \bar p\quad \forall t \in \bbR^+.
		\eeq
		Thus, by replacing these values in \eqref{hop_solution} and in \eqref{hop_functionalSCE} we recover the standard AGS picture.
	\end{proof}
\end{Corollary}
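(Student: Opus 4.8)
The plan is to exploit the autonomous structure of the functional self-consistency equations \eqref{hop_functionalSCE} in order to collapse the whole functional problem onto the ordinary, constant-in-time AGS one. The key observation to be made first is that each right-hand side of \eqref{hop_functionalSCE} depends on the trajectories $\bar m, \bar p, \bar q$ only through the definite integrals $\int_0^\beta \bar m(t)\,\textrm{d}t$, $\int_0^\beta \bar p(t)\,\textrm{d}t$ and $\int_0^\beta \bar q(t)\,\textrm{d}t$, which are fixed numbers once the trajectories are assigned. In particular, none of the three right-hand sides carries any residual dependence on the free parameter $t'$.

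Since the left-hand sides of \eqref{hop_functionalSCE} are precisely $\bar m(t')$, $\bar q(t')$ and $\bar p(t')$, the equations then force each of these functions to equal a $t'$-independent quantity for every $t' \in [0,\beta]$. Hence I would conclude that the only admissible RS solution is the constant one, namely $\bar m(t) = \bar m$, $\bar q(t) = \bar q$ and $\bar p(t) = \bar p$ for all $t$, with $(\bar m, \bar p, \bar q)$ now genuine numbers. Under this constancy the integrals evaluate trivially: $\int_0^\beta \bar m(t)\,\textrm{d}t = \beta \bar m$, $\int_0^\beta \bar q(t)\,\textrm{d}t = \beta\bar q$, $\int_0^\beta [1-\bar q(t)]\,\textrm{d}t = \beta(1-\bar q)$ and $\int_0^\beta \bar p(t)\,\textrm{d}t = \beta\bar p$.

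The final step is a direct substitution. Plugging these evaluated integrals into the functional pressure \eqref{hop_solution} turns every occurrence of $\int_0^\beta \bar m\,\textrm{d}t$ into $\beta \bar m$, and analogously for the remaining integrals, so that the $\log\cosh$ argument becomes $\beta\bar m + J\sqrt{\alpha\beta\bar p}$, the rational term becomes $\tfrac{\alpha}{2}\,\tfrac{\beta\bar q}{1-\beta(1-\bar q)}$, and the last integral becomes $\tfrac{\beta}{2}[\alpha\bar p(1-\bar q)+\bar m^2]$; this immediately yields the closed AGS expression \eqref{hop_agssolution}. Feeding the same constant values into \eqref{hop_functionalSCE} reproduces the three standard self-consistencies \eqref{hop_agsSCE}.

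The only point deserving care, and the place I would expect a reader to pause, is the bootstrapping logic of the first two paragraphs: one must argue that self-consistency of \eqref{hop_functionalSCE} \emph{itself} forces constancy of the trajectories, rather than assuming constancy from the outset. No real obstacle arises here, however, because the hyperbolic-function arguments and the rational expression for $\bar p$ are manifestly $t'$-independent numbers, so the reduction is in the end just a matter of reading off the equations and performing the substitution.
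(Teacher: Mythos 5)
Your proposal is correct and follows essentially the same route as the paper: you observe that the right-hand sides of \eqref{hop_functionalSCE} depend only on the definite integrals of the trajectories and are therefore $t'$-independent, conclude that the order parameters must be constant, and then substitute $\int_0^\beta \bar m(t)\,\textrm{d}t = \beta\bar m$ (and analogously for $\bar p$, $\bar q$) into \eqref{hop_solution} and \eqref{hop_functionalSCE}. The only difference is that you spell out the substitutions explicitly and flag the bootstrapping logic, which the paper leaves implicit.
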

\section{The relativistic Hopfield network}\label{definizioni}

The ``relativistic'' Hopfield model has been introduced in \cite{Albert1}, where its investigation was restricted to the low storage (see also  \cite{Notarnicola,Marullo}).
The appellation ``relativistic'' can be understood via the mechanical analogy: the Hamiltonian of the classical Hopfield model \eqref{hop_hbare} in terms of the Mattis magnetizations reads as $H_N(\boldsymbol \sigma| \boldsymbol \xi) = - N \sum_{\mu=1}^{P} m_{\mu}^2/2$ and its statistical mechanical picture can be mapped into the dynamics of a fictitious particle of unitary mass living in a $(P+1)$-dimension  Minkowsky space \cite{Albert1}; in this analogy the Mattis magnetization plays the role of the momentum, and the Hopfield Hamiltonian \eqref{hop_hbare} reads as the (classical) kinetic energy for this particle. As the underlying metric of the space-time is Minkowskian, it is quite natural to generalize the classical expression of the kinetic energy into its relativistic counterpart.
\newline
Here, our interest in the ``relativistic'' Hopfield model is two-fold: on the one hand it provides an extension of the classical model which includes also $p$-body interactions and, in this sense, it constitutes an example of dense networks, on the other hand, it allows us to test our generalized mathematical techniques.
\bigskip
\begin{Definition}
	Set $\alpha, \lambda \in \mathbb{R}^+$  and let $\boldsymbol \sigma \in \{- 1, +1\}^N$ be a configuration of $N$ binary neurons. Given $P=\alpha N$ random patterns $\{\boldsymbol \xi^{\mu}\}_{\mu=1,...,P}$, each made of $N$ digital entries identically and independently drawn with probability $P(\xi_i^{\mu}=+1)=P(\xi_i^{\mu}=-1)=1/2$, for $i=1,...,N$, the ``relativistic'' Hopfield cost-function (or ``Hamiltonian'' to preserve a physical jargon) is
	\beq
	H^{\textrm{rel}}_N(\boldsymbol \sigma| \boldsymbol \xi,\lambda) := H_N(\boldsymbol \sigma| \boldsymbol \xi,\lambda) = -\frac{N}{\lambda}\sqrt{1 + \lambda\sum_{\mu=1}^{P} \Big( \sum_{i=1}^N \frac{\xi_i^{\mu} \sigma_i}{N} \Big)^2}.
	\label{HopfieldRelativ}
	\eeq
\end{Definition}
%
	The additional parameter $\lambda$ is such that for $\lambda =1$ we recover the pure relativistic scenario, while for $\lambda \neq 1$ we get a neural network with enriched computational skills with respect to the classical one (\ref{hop_hbare}); this point has been extensively addressed in \cite{Albert1}, while here we focus on the mathematical backbone of the theory.
%
\begin{Remark}
	By expanding the relativistic model (\ref{HopfieldRelativ}) we obtain
	\begin{equation}\label{espansione}
	\frac{H_N(\boldsymbol \sigma| \boldsymbol \xi,\lambda)}{N} \sim - \frac{1}{\lambda} -\frac{1}{2N^2}\sum_{i,j}^{N,N}(\sum_{\mu=1}^{P}\xi_i^{\mu}\xi_j^{\mu})\sigma_i\sigma_j + \frac{\lambda}{8N^4}\sum_{i,j,k,l}^{N,N,N,N}(\sum_{\mu=1}^{P}\xi_i^{\mu}\xi_j^{\mu})(\sum_{\nu=1}^{P}\xi_k^{\nu}\xi_l^{\nu})\sigma_i\sigma_j\sigma_k \sigma_l + ...
	\end{equation}
	namely a series of (denser and denser) associative networks. 
	Note also the alternate signs of the terms making up the series: the attractive ones contribute to memory storage, while the repulsive ones contribute to memory erasure \cite{Albert1,Albert2,unlearning1}.
\end{Remark}
\begin{Remark}
	In order to get more intuition on the model, we can look at the internal field $h_i$ acting on neuron $i$ and which, in a dynamical picture, can be used to ascertain whether the neuronal state $\sigma_i$ is likely to be updated or not (see e.g., \cite{Coolen, Hertz}). By definition, $H_N = -\sum_i h_i \sigma_i$, in such a way that the internal field reads as
	\beq \notag 
	h_i = \left( \sum_{\nu=1}^P \xi_i^{\nu} m_{\nu} \right) \sum_{k=0}^{\infty} \calB\left(\frac{1}{2},k\right) \lambda^{k-1} \left ( \sum_{\mu=1}^P m_{\mu}^2 \right)^{k-1} = - \frac{H_N(\boldsymbol \sigma| \boldsymbol \xi,\lambda)}{N} \left( \sum_{\nu=1}^P \xi_i^{\nu} m_{\nu}\right) \left ( \sum_{\mu=1}^P m_{\mu}^2 \right)^{-1},
	\eeq
	where $m_\mu=\frac1N \sum_i \xi^\mu_i \sigma_i$ is the Mattis magnetization associated to the pattern $\xi^\mu$, while $\calB (n,k)$ is the general binomial coefficient, namely $\calB (n,k) = \Gamma(n+1)/[\Gamma(k+1)\Gamma(1-k+n)]$, $n,k \in \mathbb{R}$. One can see that configurations corresponding to simultaneously large magnitudes of the magnetization entries  (i.e., $\sum_{\mu=1}^P m_{\mu}^2$ large) yield to weaker fields; also, by increasing $\lambda$ the internal field decreases and the system gets more susceptible to the effects of external noise sources.
\end{Remark}
%
%
%
%
\begin{Definition}
	The infinite volume limit of the quenched pressure of the ``relativistic'' Hopfield network, $A^{\textrm{rel}}(\alpha,\beta,\lambda)$, can be written in terms of the partition function of the model $Z^{\textrm{rel}}_{N}(\boldsymbol \xi,\beta,\lambda)$ as
	\bea\label{freeE}
	A^{\textrm{rel}}(\alpha,\beta,\lambda) &:=& A(\alpha,\beta,\lambda) = \lim_{N \to \infty} A_{N}(\alpha, \beta, \lambda) = \lim_{N \to \infty} \frac{1}{N}\mathbb{E}\ln Z_{N}(\boldsymbol \xi,\beta,\lambda),\\
	Z^{\textrm{rel}}_{N}(\boldsymbol \xi,\beta,\lambda)&:=&  Z_{N}(\boldsymbol \xi,\beta,\lambda) := \sum_{\{ \boldsymbol \sigma\}}\exp\left(-\beta H_N(\boldsymbol \sigma | \boldsymbol \xi,\lambda) \right).
	\eea
\end{Definition}
As in the previous section, we use the subscript $N$ to stress that we are working at finite size, while when we omit it we mean that we are evaluating quantities in the thermodynamic limit.
\begin{Definition}
	Given $O(\boldsymbol \sigma)$ as a generic function of the neuron configuration, we define the Boltzmann average $\omega_{\boldsymbol{\xi}}(O)$, its replicated product state over $s$ replicas $\Omega$ and its quenched expectation $\langle O \rangle$ respectively as
	\begin{eqnarray}
	\omega_{\boldsymbol{\xi}}(O) &:=& \frac{ \sum_{\{ \boldsymbol \sigma\} } O(\boldsymbol \sigma)e^{-\beta H_N(\boldsymbol \sigma| \boldsymbol \xi,\lambda)}}{Z_{N}( \boldsymbol \xi,\beta,\lambda)}, \\
	\Omega(O) &:=& \omega_{\boldsymbol{\xi}}^{(1)}(O)\times \omega_{\boldsymbol{\xi}}^{(2)}(O) \times... \times \omega_{\boldsymbol{\xi}}^{(s)}(O),\\
	\label{eq:av}
	\langle O \rangle &:=& \mathbb{E} \Omega(O).
	\end{eqnarray}
\end{Definition}
In the next section, we will exploit a novel interpolation strategy to obtain an explicit expression for the quenched pressure of this model, then, in Sec. \ref{momentosecondo}, preserving the same interpolation (but focusing on the variances of the order parameters instead of the pressure), we will provide a detailed picture of its (possible) critical behavior.

\section{The mechanical generalization of Guerra's interpolation scheme}\label{momentoprimo}

In this Section we achieve the RS expression for the quenched pressure of the relativistic Hopfield model again exploiting a differential-equation-based approach. The main problem consists in the fact that, unlike the classical counterpart, here the cost function (\ref{espansione}) is not monomial in its degrees of freedom $\{\sigma_i\}_{i=1,...,N}$, but it is a (convergent) infinite series. In order to deal with this, we take advantage of the integral representation of the Dirac delta directly on the cost function (\ref{HopfieldRelativ}), rather than on the order parameters as usual \cite{Coolen}. This is represented by the following
\begin{Proposition}
	The partition function of the ``relativistic'' Hopfield model (\ref{HopfieldRelativ}) displays the following integral representation
	\bes \label{eq:integral}
	Z_{N}(\boldsymbol \xi,\beta,\lambda)\propto&  \sum_{\{ \boldsymbol \sigma \}} \int dX\,dK\,\calD\tau\, \exp \Bigg [\frac{\beta N}{\lambda}\sqrt{1 + \lambda X} -\frac{KX \beta N}{2} +    \\
	&\hspace{10em}  +\frac{\beta N}{2} \Big( \frac{\sqrt{K}}{N}\sum_{i=1}^N \xi^1_i \sigma_i \Big)^2+\sqrt{\frac{\beta K}{N}}\sum_{\mu=2}^{P} \sum_{i=1}^N \xi_i^{\mu}\sigma_{i}\tau_{\mu}\Bigg ].
	\ees

\end{Proposition}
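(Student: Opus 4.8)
The plan is to turn the non-polynomial square-root interaction into a standard Gaussian (Hubbard--Stratonovich) form by first promoting the argument of the square root to an independent integration variable. I would start from the definition $Z_{N}(\boldsymbol\xi,\beta,\lambda)=\sum_{\{\boldsymbol\sigma\}}\exp\big(\tfrac{\beta N}{\lambda}\sqrt{1+\lambda\sum_{\mu=1}^{P}m_{\mu}^{2}}\,\big)$ following \eqref{HopfieldRelativ}, with $m_{\mu}=\tfrac1N\sum_{i}\xi_i^{\mu}\sigma_i$, and insert the identity
\[
1=\int dX\,\delta\Big(X-\sum_{\mu=1}^{P}m_{\mu}^{2}\Big).
\]
This replaces the $\boldsymbol\sigma$-dependent radicand by the free variable $X$, so the troublesome factor becomes $\exp\big(\tfrac{\beta N}{\lambda}\sqrt{1+\lambda X}\big)$, now completely decoupled from the neural configuration, exactly as it appears in the claimed formula \eqref{eq:integral}.

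The second step is to write the Dirac delta through its integral representation, introducing the conjugate variable $K$ in the normalization
\[
\delta\Big(X-\sum_{\mu=1}^{P}m_{\mu}^{2}\Big)\propto\int dK\,\exp\Big[-\tfrac{\beta N}{2}K\Big(X-\sum_{\mu=1}^{P}m_{\mu}^{2}\Big)\Big],
\]
where all configuration-independent prefactors (the $\beta N$ rescaling, the $2\pi$, and the imaginary unit coming from the Fourier contour) are absorbed into the overall proportionality constant; this is harmless, since such a constant cancels in the intensive pressure $A_{N}(\alpha,\beta,\lambda)$. This step produces the term $-\tfrac{\beta N}{2}KX$ together with the still-quadratic overlap term $+\tfrac{\beta N}{2}K\sum_{\mu}m_{\mu}^{2}$.

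The third step splits the overlap sum into the condensed direction $\mu=1$ and the remaining $P-1$ patterns. I would keep $\mu=1$ in quadratic form, writing $\tfrac{\beta N}{2}Km_{1}^{2}=\tfrac{\beta N}{2}\big(\tfrac{\sqrt K}{N}\sum_{i}\xi_i^{1}\sigma_i\big)^{2}$, which is precisely the signal term of the statement. For each $\mu\ge 2$ I would apply the Gaussian linearization $\exp(\tfrac{a^{2}}{2})=\int\tfrac{d\tau}{\sqrt{2\pi}}\exp(-\tfrac{\tau^{2}}{2}+a\tau)$ with $a=\sqrt{\beta N K}\,m_{\mu}$; since $\sqrt{\beta N K}\,m_{\mu}=\sqrt{\tfrac{\beta K}{N}}\sum_{i}\xi_i^{\mu}\sigma_i$, this generates exactly $\int\mathcal D\tau\,\exp\big(\sqrt{\tfrac{\beta K}{N}}\sum_{\mu\ge 2}\sum_{i}\xi_i^{\mu}\sigma_i\tau_{\mu}\big)$, i.e. the noise term of \eqref{eq:integral} with $\mathcal D\tau$ the Gaussian measure. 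Collecting the $X$-factor, the conjugate term $-\tfrac{\beta N}{2}KX$, the $\mu=1$ quadratic piece, and the $\mu\ge2$ Gaussian piece reproduces the claimed integral representation up to the proportionality constant.

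The main technical obstacle is the rigorous justification of the combined delta/Hubbard--Stratonovich step: the Gaussian identity converges only where the quadratic coefficient $\beta N K$ is nonnegative, so the contour of $K$ (purely imaginary in the literal Fourier representation of the delta) must be understood as deformed onto the region on which the linearization makes sense, and one should check that this deformation does not meet the branch point of $\sqrt{1+\lambda X}$ at $X=-1/\lambda$. This is the standard steepest-descent subtlety underlying such representations; at the level of the proposition it suffices to verify that every discarded multiplicative factor is configuration-independent, so that the identity holds up to the proportionality symbol ``$\propto$'' used in the statement, which is all that is needed since it drops out of the quenched pressure.
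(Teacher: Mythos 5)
Your proposal is correct and follows essentially the same route as the paper: insert a Dirac delta fixing $X=\sum_{\mu}m_{\mu}^{2}$, Fourier-represent it with the conjugate variable $K$ (rescaled so the imaginary unit and prefactors are absorbed into the proportionality), keep the $\mu=1$ signal term quadratic, and Hubbard--Stratonovich linearize the $\mu\ge 2$ noise terms. Your closing remark on the contour deformation of $K$ is the same subtlety the paper relegates to a footnote, so nothing essential is missing.
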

\begin{proof}
	The proof works by direct construction. In fact, by definition, the partition function for the Hamiltonian (\ref{HopfieldRelativ}) is
	\bes
	Z_{N}(\boldsymbol \xi,\beta,\lambda)=& \sum_{\{ \boldsymbol \sigma \}} \exp\left[\frac{\beta N}{\lambda}\sqrt{1+\frac{\lambda}{N^2}\sum_{\mu=1}^P \Big( \sum_{i=1}^{N}\xi_i^{\mu}\sigma_i \Big)^2} \right],
	\ees
	and, expanding the square and exploiting the integral representation of the Dirac delta, we get
	\bes
	Z_{N}(\boldsymbol \xi,\beta,\lambda)=& \sum_{\{ \boldsymbol \sigma \}}  \int dX \delta \big( X - \frac{1}{N^2}\sum_{\mu,i,j=1}^{P,N,N}\xi_i^{\mu}\xi_j^{\mu}\sigma_i \sigma_j \big)\exp\Big( \frac{\beta N}{\lambda}\sqrt{1 + \lambda X} \Big)\\
	\propto& \sum_{\{ \boldsymbol \sigma \}} \int dX\,dK\,\exp\Big( \frac{\beta N}{\lambda}\sqrt{1 + \lambda X} + iKX -i \frac{K}{N^2}  \sum_{\mu,i,j=1}^{P,N,N}\xi_i^{\mu}\xi_j^{\mu}\sigma_i \sigma_j\Big).
	\label{ZetaUno}
	\ees
	Notice that the auxiliary variable $X$ allows us to move the sum out of the square root and the auxiliary variable $K$ allows us to recast the $\delta$ function in a mathematically more convenient expression.\\
	Now, we rescale $K \to i\frac{\beta N}{2} K$ as standard \cite{Coolen} and we apply a Hubbard-Stratonovich transformation, in such a way that eq. (\ref{ZetaUno}) can be recast as\footnote{Notice that the complex shift for the variable $K$ formally converts fluctuating functions in exponential functions. However, this does not lead to divergences of the whole partition function, since also integration bounds change: the $K$ variable is therefore integrated along the imaginary axis, so functions in $K$ are still fluctuating.}
	\bes\label{eq:ZetaDue}
	Z_{N}(\boldsymbol \xi,\beta,\lambda) \propto& \sum_{\{ \boldsymbol \sigma \}} \int dX\,dK\,\exp\Big( \frac{\beta N}{\lambda}\sqrt{1 + \lambda X} -\frac{KX \beta N}{2}+ \frac{K \beta}{2N}  \sum_{\mu,i,j=1}^{P,N,N}\xi_i^{\mu}\xi_j^{\mu}\sigma_i \sigma_j\Big)\\
	=&  \sum_{\{ \boldsymbol \sigma \}} \int dX\,dK\,\calD\tau\, \exp \Bigg [\frac{\beta N}{\lambda}\sqrt{1 + \lambda X} -\frac{KX \beta N}{2} +    \\
	&\hspace{10em}  +\frac{\beta N}{2} \Big( \frac{\sqrt{K}}{N}\sum_{i=1}^N \xi^1_i \sigma_i \Big)^2+\sqrt{\frac{\beta K}{N}}\sum_{\mu=2}^{P} \sum_{i=1}^N \xi_i^{\mu}\sigma_{i} \tau_{\mu}\Bigg ],
	\ees
	where, we recall, $\calD\tau : = \prod_{\mu=1}^P \frac{e^{- \tau_{\mu}^2/2}}{ \sqrt{2 \pi}}$.
	Here, under the assumption that a single pattern (say $\boldsymbol \xi^1$) is candidate for retrieval, we split the signal term from the (quenched) noise stemming from all the remaining patterns ($\mu>1$). We also disregarded pre-factors (due to the integral form of Dirac's delta and to rescaling) linear in $N$ and whose contribution in the intensive pressure is vanishing in the thermodynamic limit.
\end{proof}
\begin{Remark}
Here, $X$ and $K$ play the role of auxiliary order parameters. In particular, by construction (notice the delta function), the parameter $X$ is fixed to the value
\begin{equation}
X\equiv \frac{1}{N^2}\sum_{\mu,i,j=1}^{P,N,N}\xi_i^{\mu}\xi_j^{\mu}\sigma_i \sigma_j=\Big(\frac{1}{N}\sum_{i=1}^{N}\xi_i^{1}\sigma_i\Big)^2+\frac{1}{N^2}\sum_{\mu\ge 2}^P\sum_{i,j=1}^{N,N}\xi_i^{\mu}\xi_j^{\mu}\sigma_i \sigma_j,
\end{equation}
where the signal and the noise terms are highlighted. The fact that these display the same form as in the Hopfield model suggests that even in the ``relativistic'' model the number of storable patterns is $P\sim \alpha N$.
\end{Remark}

\begin{Definition} \label{def:rop}
	Besides the auxiliary order parameters $X$ and $K$, the order parameters needed to handle the model (\ref{HopfieldRelativ}) are the natural relativistic extension of the standard ones, namely the Mattis magnetization and the two-replica overlaps introduced in Definition \ref{hop_orderparameters},  defined as follows:
	\bea \label{orderparameters22}
	\tilde{m}_1(\boldsymbol \sigma) &:=& m_1(\boldsymbol \sigma | \boldsymbol \xi) :=\frac{1}{N}\sum_{i=1}^N \xi^1_i \sigma_i,\\
	\label{orderparameters23}
	\tilde{q}_{12} (\boldsymbol \sigma)&:=&\frac{1}{N}\sum_{i=1}^N \sigma^{(1)}_i\sigma^{(2)}_i,\\
	\label{orderparameters_end}
	p_{12}(\boldsymbol \tau)&:=&\frac{1}{P-1}\sum_{\mu>1}^{P} \tau^{(1)}_\mu \tau^{(2)}_\mu, \quad 	p_{11}(\boldsymbol \tau):=\frac{1}{P-1}\sum_{\mu>1}^{P} (\tau^{(1)}_\mu)^2.
	\eea
\end{Definition}
\begin{Remark}
	The reason for the tilde notation is that, in the following, we will introduce a transformation (by a multiplicative factor) on parameters $\tilde{m}$ and $\tilde{q}$ and the transformed variables shall be denoted as $m$ and $q$, respectively.
Also notice that, unlike the previous approach for the classical Hopfield model, here we need also the diagonal term $p_{11}$, hence we will not drop the indices.
	\end{Remark}%
\begin{Definition}
	Assuming that the order parameters display well defined, replica-symmetric expectations in the thermodynamic limit, we define
	\bea
	\bar {\tilde{m}} &:=&\lim_{N\to \infty} \mv{m_1},\\
	\bar {\tilde{q}}_{12} &:=&\lim_{N\to \infty}  \mv{q_{12}}, \\
	\bar p_{12} &:=&\lim_{N\to \infty} \mv{p_{12}},\\
	\bar p_{11} &:=&\lim_{N\to \infty} \mv{p_{11}}.
	\eea
	Analogously, for the auxiliary order paramaters,
	\bea
	\bar X&:=&\lim_{N\to \infty} \mv{X},\\
	\bar K&:=&\lim_{N\to \infty} \mv{K}.
	\eea
\end{Definition}

We can now discuss our strategy to solve the ``relativistic'' Hopfield model. We anticipate that the main idea is to introduce an interpolating pressure $\mathcal{A}_{N}(t)$ that recovers the original model for $t=1$, while for $t=0$ it corresponds to the pressure of a simpler model analytically addressable; then, the expression for $\mathcal{A}_{N}(t)$ is obtained by exploiting the fundamental theorem of calculus
$$\mathcal{A}_N(t=1)=  \mathcal{A}_N(t=0) + \int_{0}^{1} \dot{\mathcal{A}}_N(\tilde t) \textrm d \tilde t.$$
The expression for $\mathcal{A}_{N}(t)$ can be figured out recalling that it can display a number of additional variables (effective only for $t \neq 1$), which can be set a posteriori; with a suitable choice of these variables, the derivative of $\mathcal{A}_{N}(t)$ could be written in terms of the correlation functions of the order parameters of the original model in such a way that, at least under some assumptions (e.g., replica symmetry), the integral of the derivative of the interpolating pressure can be solved. 
\newline
The preliminary steps in this path are therefore the introduction of an interpolating pressure (Definition \ref{def:inter}), the evaluation of its streaming (Lemma \ref{Propa2}) and of the Cauchy condition (Lemma \ref{Propa3}).

\begin{Definition} \label{def:inter}
	Given a scalar interpolating parameter $t \in [0,1]$, $N+P$ auxiliary quenched i.i.d. random variables $J_i \sim \mathcal{N}[0,1], i \in (1,...,N)$ and $\tilde{J}_{\mu}\sim \mathcal{N}[0,1], \mu \in (1,...,P)$, and five real constants $C_m,\ C_{\sigma},\ C_{z}, V_{\sigma}, V_\tau$ to be set a posteriori, we define the Guerra Generalized Action $\mathcal{A}_N(t)$ as
	\bes\label{GuerrA}
	\mathcal{A}_{N}(t) :=& \frac{1}{N}\mathbb{E}\log \sum_{\{ \boldsymbol \sigma \}} \int dX\,dK\,\calD\tau\exp\Big[\frac{\beta N}{\lambda}\sqrt{1 + \lambda X} -\frac{KX \beta N}{2}+t \frac{\beta N}{2}m_1^2 +\\
	&+ (1-t) C_m N m_1 + \sqrt{t}\sqrt{\frac{\beta}{N}}\sum_{\mu,i=1}^{P,N}\xi_i^{\mu}\sqrt{K}\sigma_i \tau_{\mu} + \sqrt{1-t}\big(C_\sigma\sum_{i=1}^{N}J_i \sqrt{K}\sigma_i +\\
	&+ C_\tau \sum_{\mu=1}^P\tilde{J}_{\mu}\tau_{\mu}\big)+(1-t)V_\tau\sum_{\mu=1}^P \frac{\tau_{\mu}^2}{2}+\frac{(1-t)}{2}V_\sigma N K\Big],
	\ees
	where now $\mathbb{E}$ averages over all the quenched random variables involved in the above expression.
\end{Definition}

\begin{Remark}
In the following, instead of working with $t$-dependent expectations for the order parameters and to check {\it a posteriori} that they do not depend on the interpolation parameter $t$, for the generic observable $X$ we will directly assume that $\lim_{N \to \infty} P_t(X)=\delta (X-\bar X)$ for all $t\in [0,1]$. However, in the Appendix, we check that the results of our interpolation procedure are consistent with the replica trick computations, which do not make use of any interpolation scheme. 
\end{Remark}
To simplify calculations it is convenient to introduce a new set of order parameters, as stated in the following
\begin{Definition}
	The Mattis magnetization and the two-replica overlap defined, respectively, in (\ref{orderparameters22}) and (\ref{orderparameters23}), are transformed as 
	\bea \label{orderparameters}
	m(\boldsymbol \sigma) &:=& {\sqrt{K}}\tilde{m}(\boldsymbol \sigma),\\
	 \label{orderparameters_bo}
	q_{12} (\boldsymbol \sigma)&:=&\sqrt{K^{(1)}K^{(2)}}\tilde q_{12}(\boldsymbol \sigma),
	\eea
	where $K^{(a)}$ means that the variable $K$ is evaluated in the $a$-th replica of the system.
	The corresponding RS expectations are denoted with the upper bar, i.e., $\bar{m}$ and $\bar{q}_{12}$.
	Conversely, the variables $X$, $K$, $p_{11}$ and $p_{12}$ are left unchanged. 
\end{Definition}%

As in the previous section dealing with the classical Hopfield model, we exploit the \emph{universality property} of the quenched noise in the thermodynamic limit and retain the signal as Boolean while the remaining the contribution from the remaining $P-1$ patterns is treated as a Gaussian variable \cite{Univ1,Genovese,Agliari-Barattolo}.

\begin{Lemma}\label{Propa2}
	In the infinite volume limit, the $t$-derivative of the Guerra Generalized Action, under the replica symmetric ansatz, reads as
	\bes
	\frac{d\mathcal{A}_{\RS}(t)}{dt} = &-\frac{\beta}{2} \bar{m}^2  
	-\frac{\beta\alpha}{2}(\bar p_{11}\bar K-\bar p_{12}\bar q_{12}).
	\ees
\end{Lemma}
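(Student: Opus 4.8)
The plan is to prove the lemma by differentiating the Guerra Generalized Action \eqref{GuerrA} with respect to $t$ and then reorganizing the resulting expectations through Gaussian integration by parts, exactly in the spirit of the Lemma proven for the classical model. First I would bring $\partial_t$ inside the quenched state $\mvt{\cdot}$, producing seven contributions: the four \emph{deterministic} ones coming from $t\tfrac{\beta N}{2}m^2$, $(1-t)C_m N m$, $(1-t)V_\tau\sum_\mu\tau_\mu^2/2$ and $\tfrac{(1-t)}{2}V_\sigma NK$, and the three \emph{stochastic} ones carrying the quenched fields $\xi_i^\mu$, $J_i$, $\tilde J_\mu$, each weighted by $\tfrac{1}{2\sqrt t}$ or $-\tfrac{1}{2\sqrt{1-t}}$. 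The deterministic terms give immediately $\tfrac{\beta}{2}\mvt{m^2}$, $-C_m\mvt{m}$, $-\tfrac{\alpha V_\tau}{2}\mvt{p_{11}}$ and $-\tfrac{V_\sigma}{2}\mvt{K}$, the factor $\alpha$ arising from $\tfrac1N\sum_{\mu}\to\alpha$ in the thermodynamic limit. Here $m=\sqrt K\tilde m$ and $q_{12}=\sqrt{K^{(1)}K^{(2)}}\tilde q_{12}$ are the rescaled order parameters \eqref{orderparameters}--\eqref{orderparameters_bo}, and the signal/noise split of the $\xi$-sum follows the integral representation \eqref{eq:integral}.

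The core step is the treatment of the three stochastic terms by Wick's theorem $\mathbb E[\eta f]=\mathbb E[\partial_\eta f]$. Differentiating the Boltzmann factor with respect to each field produces a \emph{diagonal} (single-replica) contraction and an \emph{overlap} (two-replica) contraction; using $\sigma_i^2=1$ and the definitions of Def.~\ref{def:rop}, these collapse onto the order parameters. Concretely, the $\xi$-channel yields $\tfrac{\beta\alpha}{2}\big(\mvt{Kp_{11}}-\mvt{q_{12}p_{12}}\big)$ (the diagonal $\omega_t(K\tau_\mu^2)$ giving $K p_{11}$, the overlap $\omega_t(\sqrt K\sigma_i\tau_\mu)^2$ giving $q_{12}p_{12}$), the $J$-channel yields $-\tfrac{C_\sigma^2}{2}\big(\mvt{K}-\mvt{q_{12}}\big)$, and the $\tilde J$-channel yields $-\tfrac{\alpha C_\tau^2}{2}\big(\mvt{p_{11}}-\mvt{p_{12}}\big)$. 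The delicate point is bookkeeping the prefactors $\sqrt{\beta/N}$ and $\sqrt K$ (the latter lives inside the $dK$-integration, so the contraction must be performed treating $K,X$ as internal variables) together with the combinatorial factors $\tfrac{P-1}{N}\to\alpha$.

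I would then impose the replica-symmetric ansatz, under which every $\mvt{\cdot}$ tends to its barred value and all products factorize ($\mvt{Kp_{11}}\to\bar K\bar p_{11}$, $\mvt{q_{12}p_{12}}\to\bar q_{12}\bar p_{12}$, $\mvt{m^2}\to\bar m^2$). The derivative is then a combination of the two products $\bar K\bar p_{11}$, $\bar q_{12}\bar p_{12}$ and of terms linear in a single order parameter. The five constants of Def.~\ref{def:inter} are fixed ``a posteriori'' so as to cancel the linear terms: writing each product as $\langle XY\rangle=\bar X\mvt Y+\bar Y\mvt X-\bar X\bar Y+\mvt{\Delta X\Delta Y}$ and requiring the coefficients of $\mvt m,\mvt K,\mvt{q_{12}},\mvt{p_{11}},\mvt{p_{12}}$ to vanish forces
\begin{equation}
C_m=\beta\bar m,\quad C_\tau^2=\beta\bar q_{12},\quad C_\sigma^2=\beta\alpha\bar p_{12},\quad V_\tau=\beta(\bar K-\bar q_{12}),\quad V_\sigma=\beta\alpha(\bar p_{11}-\bar p_{12}).
\end{equation}
Substituting these values, the linear terms drop out and only the $-\bar X\bar Y$ remainders survive, combining (together with $\tfrac\beta2\bar m^2-C_m\bar m=-\tfrac\beta2\bar m^2$) into $-\tfrac\beta2\bar m^2-\tfrac{\beta\alpha}{2}(\bar p_{11}\bar K-\bar p_{12}\bar q_{12})$, which is the claim.

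The main obstacle is twofold. First, the Wick step must correctly separate the diagonal contraction (equal replica and index, producing $p_{11}$ and the self-term $\bar K$) from the overlap contraction (producing $q_{12}p_{12}$ and $q_{12}$), with the $\sqrt K$ factors threaded through consistently. Second, and more conceptually, one must check that the system of cancellation conditions is \emph{not} over-determined: there are exactly four conditions for the four constants $(C_\sigma,C_\tau,V_\sigma,V_\tau)$ plus one for $C_m$, and the count above shows they close. This compatibility is precisely what lets the single-parameter interpolation reproduce, in one stroke, the multi-direction ``transport'' that in the classical scheme was distributed over the separate spatial derivatives $\partial_x,\partial_y,\partial_z,\partial_w$.
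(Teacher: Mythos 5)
Your proposal is correct and follows essentially the same route as the paper's own proof: direct $t$-differentiation of \eqref{GuerrA}, Wick/Gaussian integration by parts on the $\xi$, $J$, $\tilde J$ channels to produce the diagonal and overlap contractions, decomposition of the products around the RS means via $\Delta$, and the choice \eqref{eq:choice} of the five constants to cancel the five terms linear in $\mvt{m_1},\mvt{K},\mvt{q_{12}},\mvt{p_{11}},\mvt{p_{12}}$, leaving only the $-\bar X\bar Y$ remainders. The constants you derive and the final expression match the paper exactly.
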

\begin{proof}
	We evaluate directly the $t$-derivative of $\mathcal{A}_{N}(t)$ and get
	\bes
	\frac{d \mathcal{A}_{N}(t) }{ dt } = &\frac{1}{N}  \Bigg[\frac{\beta N}{2} \mvt{m_1^2} + \frac{\beta}{2N} \sum_{\mu,i}^{P,N} \left(\mvt{K \tau_\mu^2} - \mvt{\sqrt{K}\sigma_i \tau_\mu}^2\right) - C_m N \mvt{m_1}- \frac{V_\tau}{2}\sum_\mu^P \mvt{\tau_\mu^2}
	+ \\
	& - \frac{C_\sigma^2}{2} \sum_i \left(\mvt{K}-\mvt{\sqrt{K}\sigma_i}^2 \right) - \frac{V_\sigma N}{2} \mvt{K} - \frac{C_\tau^2}{2}\sum_u^P \left(\mvt{\tau_\mu^2}-\mvt{\tau_\mu}^2\right)  \Bigg],
	\ees
	where $\mvt{\cdot}$ generalizes (\ref{eq:av}) to the interpolating framework \eqref{GuerrA} and $\langle \cdot \rangle_{t=1} = \mv{\cdot}$.
	Using the definitions \eqref{orderparameters22}-\eqref{orderparameters_end} and (\ref{orderparameters})-(\ref{orderparameters_bo}) for the order parameters, the streaming becomes
	\bes
	\frac{d\mathcal{A}_{N}(t)}{dt} = &\Big[\frac{\beta}{2} \mvt{m_1^2} + \frac{\beta\alpha}{2} \mvt{Kp_{11}} - \frac{\beta\alpha}{2} \mvt{q_{12}p_{12}} - C_m  \mvt{m_1}- \frac{V_\tau\alpha}{2} \mvt{p_{11}} + \\
	& - \frac{C_\sigma^2}{2}\mvt{K}+ \frac{C_\sigma^2}{2}\mvt{q_{12}} - \frac{V_\sigma }{2} \mvt{K} - \frac{\alpha C_\tau^2}{2} \mvt{p_{11}} + \frac{\alpha C_\tau^2}{2} \mvt{p_{12}}  \Big].
	\ees
	Recalling that $m_1 = \bar m + \Delta m$ (see \ref{eq:delta}), and similarly for the other order parameters, we can write 
	\bea
	\nonumber
	\langle m_1^2\rangle_t &=& \langle (\bar{m} + \Delta m)^2 \rangle_t =- \bar{m}^2 + 2 \bar{m}\mv{m_1}_t+ \langle (\Delta m)^2 \rangle_t,  \\
	\nonumber
	\langle q_{12} p_{12} \rangle _t&=& \langle (\bar{q}_{12} + \Delta {q}_{12})(\bar{p}_{12} + \Delta p_{12})\rangle_t = - \bar{q}_{12}\bar{p}_{12} + \bar{p}_{12}\langle q_{12}\rangle_t + \bar{q}_{12}\langle p_{12} \rangle_t + \langle \Delta q_{12} \Delta p_{12} \rangle_t,\\
	\nonumber
	\langle K p_{11} \rangle_t &=& \langle (\bar K + \Delta K)(\bar{p}_{11} + \Delta p_{11})\rangle_t = - \bar K\bar{p}_{11} + \bar{p}_{11}\langle K\rangle_t + \bar K\langle p_{11} \rangle _t+ \langle \Delta K \Delta p_{11}\rangle_t.
	\eea
	Now, under the assumption $\langle \Delta X \rangle_t \to 0$ and $\langle \Delta X  \Delta Y \rangle_t \to 0$ for all the observables involved, the $t$-streaming of the Guerra Generalized Action reads as
	\bes
	\frac{d\mathcal{A}_{\RS}(t)}{dt} = &-\frac{\beta}{2} \bar{m}^2  -\frac{\beta\alpha}{2}(\bar p_{11}\bar K-\bar p_{12}\bar q_{12}) + \Big[\beta \bar{m}\mvt{m_1}+  \frac{\beta\alpha}{2} \bar K \mvt{p_{11}}+  \frac{\beta\alpha}{2}\bar p_{11} \mvt{K} \\
	&- \frac{\beta\alpha}{2}\bar  q_{12}\mvt{p_{12}}+ - \frac{\beta\alpha}{2} \bar p_{12}\mvt{q_{12}}- C_m  \mvt{m_1}- \frac{V_\tau\alpha}{2} \mvt{p_{11}} - \frac{C_\sigma^2}{2}\mvt{K}+ \frac{C_\sigma^2}{2}\mvt{q_{12}} \\
	&- \frac{V_\sigma }{2} \mvt{K} +- \frac{\alpha C_\tau^2}{2} \mvt{p_{11}} + \frac{\alpha C_\tau^2}{2} \mvt{p_{12}}  \Big],
	\ees
	and, by choosing
	\begin{equation}\label{eq:choice}
	C_m=\beta \bar{m},\quad C_{\sigma}=\sqrt{\alpha \beta \bar{p}_{12}},\quad C_\tau=\sqrt{\beta \bar{q}_{12}} ,\quad V_\tau=\beta(\bar K-\bar{q}_{12}),\quad V_\sigma=\beta\alpha (\bar{p}_{11}-\bar{p}_{12}),
	\end{equation}
	we can simplify further the streaming by putting to zero the terms in square brackets, therefore obtaining the statement of the lemma.
\end{proof}
\begin{Lemma}\label{Propa3}
	In the infinite volume limit, the Cauchy condition $\mathcal{A}_{\RS}(t=0)$ of the Guerra Generalized Action, under the replica symmetric ansatz reads as
	\bea
	\nonumber
	\mathcal{A}_{\RS}(t=0) &=&\log2 -\frac{\beta \bar K \bar  X}{2} + \frac{\beta\alpha}{2}\bar K(\bar p_{11}-\bar p_{12}) + \frac{\beta}{\lambda}\sqrt{1+\lambda \bar  X} + \bbE_J \log\cosh\Big[\beta \bar m \sqrt{\bar K} + J  \sqrt{\beta\alpha \bar  p_{12} \bar K }\Big] \\
	\label{Cauchy}
	&-&\frac{\alpha}{2}\log\big[1-\beta(\bar K-\bar q_{12})\big]+\frac{\alpha}{2}\frac{\beta \bar q_{12}}{1-\beta(\bar K-\bar q_{12})}.
	\eea
\end{Lemma}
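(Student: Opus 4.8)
\textbf{Proof plan for Lemma \ref{Propa3}.}

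The plan is to evaluate the interpolating action \eqref{GuerrA} directly at $t=0$, where its structure simplifies drastically. At $t=0$ every term carrying a factor $\sqrt t$ drops out, in particular the coupling $\sqrt t\,\sqrt{\beta/N}\sum_{\mu,i}\xi_i^\mu\sqrt K\sigma_i\tau_\mu$, which is the only term entangling the neurons $\sigma_i$ with the auxiliary modes $\tau_\mu$. Consequently, conditionally on the auxiliary variables $X$ and $K$, the Boltzmann weight factorizes into a purely $(X,K)$-dependent ``potential'' factor $\exp[N(\frac\beta\lambda\sqrt{1+\lambda X}-\frac{\beta}{2}KX+\frac12 V_\sigma K)]$, a product over the sites $i$ of one-body weights for $\sigma_i$, and a product over the modes $\mu$ of one-body Gaussian weights for $\tau_\mu$. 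I would first exhibit this factorization explicitly, keeping in mind that after the rescaling \eqref{orderparameters}--\eqref{orderparameters_bo} the signal field couples to the transformed magnetization $m=\sqrt K\,m_1$ (consistent with the integral representation \eqref{eq:integral}, where the signal enters as $\tfrac{\beta N}{2}K m_1^2$), so that the external field acting on each $\sigma_i$ is $\sqrt K\,(C_m\xi^1_i+C_\sigma J_i)$.

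Next I would carry out the two elementary one-body computations. Summing over $\sigma_i\in\{\pm1\}$ at each site yields $\prod_i 2\cosh[\sqrt K(C_m\xi^1_i+C_\sigma J_i)]$, while the Gaussian integral over each $\tau_\mu$ against the measure $\calD\tau$ gives $\prod_\mu (1-V_\tau)^{-1/2}\exp[C_\tau^2\tilde J_\mu^2/(2(1-V_\tau))]$, which is finite provided $V_\tau<1$. Taking $\tfrac1N\mathbb{E}\log$, the site and pattern sums self-average (at fixed $X,K$) by the law of large numbers: the site sum converts into $\log2+\mathbb{E}_J\log\cosh[\sqrt K(C_m+C_\sigma J)]$, the Boolean $\xi^1_i$ being absorbed into the symmetric Gaussian $J$, and the pattern sum into $\alpha[-\tfrac12\log(1-V_\tau)+\tfrac{C_\tau^2}{2(1-V_\tau)}]$, using $P/N\to\alpha$ and $\mathbb{E}\tilde J^2=1$. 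The remaining $dX\,dK$ integral is then of Laplace type, $\int dX\,dK\,e^{N\Phi_N(X,K)}$; under the replica-symmetric concentration hypothesis ($X\to\bar X$, $K\to\bar K$ with vanishing fluctuations) it is dominated by the extremal point, so $\mathcal{A}_{\RS}(0)$ equals the limiting $\Phi$ evaluated at $(\bar X,\bar K)$.

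Finally I would insert the explicit constants \eqref{eq:choice}, namely $C_m=\beta\bar m$, $C_\sigma=\sqrt{\alpha\beta\bar p_{12}}$, $C_\tau=\sqrt{\beta\bar q_{12}}$, $V_\tau=\beta(\bar K-\bar q_{12})$ and $V_\sigma=\beta\alpha(\bar p_{11}-\bar p_{12})$. The potential factor then produces $\tfrac\beta\lambda\sqrt{1+\lambda\bar X}-\tfrac{\beta}{2}\bar K\bar X+\tfrac{\beta\alpha}{2}\bar K(\bar p_{11}-\bar p_{12})$; the cosh term becomes $\log2+\mathbb{E}_J\log\cosh[\beta\bar m\sqrt{\bar K}+J\sqrt{\beta\alpha\bar p_{12}\bar K}]$, the $\sqrt{\bar K}$ being exactly the one carried by the signal field; and the Gaussian term yields $-\tfrac\alpha2\log[1-\beta(\bar K-\bar q_{12})]+\tfrac\alpha2\tfrac{\beta\bar q_{12}}{1-\beta(\bar K-\bar q_{12})}$. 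Collecting these reproduces \eqref{Cauchy}. I expect the only genuinely delicate point to be the Laplace/saddle-point step: rigorously it requires the self-averaging of $\tfrac1N\log Z_{t=0}$ together with the concentration of the auxiliary order parameters $X,K$ about $\bar X,\bar K$, which here is granted by the RS ansatz rather than proved; one must also retain the convergence condition $\beta(\bar K-\bar q_{12})<1$ to ensure that the $\tau$-integral, and hence the whole expression, is well defined.
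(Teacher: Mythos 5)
Your proposal is correct and follows essentially the same route as the paper's own (much terser) proof: factorize the $t=0$ action into one-body $\boldsymbol\sigma$ and Gaussian $\boldsymbol\tau$ contributions, evaluate them explicitly, and treat the residual $dX\,dK$ integral by the Laplace/saddle-point method with the saddle identified with $(\bar X,\bar K)$ under the RS ansatz. You supply more detail than the paper (the explicit factorization, the $(1-V_\tau)^{-1/2}$ Gaussian normalization, the self-averaging of the site and pattern sums, and the convergence condition $\beta(\bar K-\bar q_{12})<1$), all of which is consistent with the stated result.
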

\begin{proof}
	The 1-body term we need to evaluate is
	\bes
	\mathcal{A}_{RS}(t=0) :=&\lim_{N \to \infty} \frac{1}{N}\mathbb{E}\log \sum_{\{ \boldsymbol \sigma \}} \int dX\,dK\,\calD\tau\exp\Big[\frac{\beta N}{\lambda}\sqrt{1 + \lambda X} -\frac{KX \beta N}{2}+\frac{1}{2}V_\sigma N K \\
	&+  C_m N m_1 + \big(C_\sigma\sum_{i=1}^{N}J_i \sqrt{K}\sigma_i + C_\tau \sum_{\mu=1}^P \tilde{J}_{\mu}\tau_{\mu}\big)+V_\tau\sum_{\mu=1}^P \frac{\tau_{\mu}^2}{2}\Big],
	\ees
	where we recall that the values of the constants $C_m,\ C_{\sigma},\ C_{z},\ V_{\sigma},\ V_\tau$ are fixed by the condition \eqref{eq:choice}. By inspecting this quantity, we can see that the $\boldsymbol \sigma$-dependent terms are linear in the spins, while the integral over the variables $\boldsymbol \tau$ is of Gaussian type. Therefore, the $(\boldsymbol \sigma, \boldsymbol \tau)$-dependent part can be easily evaluated, with the result being a non-trivial function of $X$, $K$ and the other order parameters. However, the remaining integral (over the variables $K$ and $X$) is of Laplace form, so the leading behaviour in the thermodynamic limit can be computed by applying the saddle point method. By following this route, with straightforward computations, we reach the thesis (\ref{Cauchy}).
\end{proof}
\begin{Theorem}
	The infinite volume limit of the replica symmetric expression of the quenched pressure of the ``relativistic'' Hopfield model in the high-storage regime reads as
	\bes
	A_{\RS}(\alpha,\beta,\lambda) =&\log2 -\frac{\beta K}{2} {m}^2  -\frac{\beta\alpha K}{2}p(1-q) -\frac{\beta K X}{2}  + \frac{\beta}{\lambda}\sqrt{1+\lambda X} +\frac{\alpha}{2}\frac{\beta K q}{1-\beta K (1-q)} + \\
	&+\bbE_J \log\cosh\Big[\beta m K + J \sqrt{\beta\alpha p K }\Big] -\frac{\alpha}{2}\log\big[1-\beta K (1-q)\big],
	\label{main-res}
	\ees
	whose extremization returns the following self-consistencies for the order parameters
	\bea
	\label{eq:sc1}
	K &=& \frac{1}{\sqrt{1+\lambda X}}, \\
	X &=& {m}^2 + \alpha{p}(1-{q}) + \frac{\alpha}{1-K\beta (1-{q})}, \\
	{q} &=& \mathbb{E}_J \tanh^2\left( K \beta {m} + J \sqrt{K \alpha \beta {p}} \right), \\
	{p} &=& \frac{K\beta{q}}{[1-K \beta (1-{q})]^2}, \\
	\label{eq:sc2}
	{m} &=& \mathbb{E}_J \tanh\left( K \beta {m} + J \sqrt{K \alpha \beta {p}} \right).
	\eea
\end{Theorem}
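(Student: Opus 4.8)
The plan is to obtain \eqref{main-res} by the fundamental theorem of calculus applied to the Guerra Generalized Action, and then to read off the self-consistencies \eqref{eq:sc1}--\eqref{eq:sc2} as the stationarity conditions of the resulting free-energy functional. Concretely, I would write $\mathcal{A}_{\RS}(1) = \mathcal{A}_{\RS}(0) + \int_0^1 \frac{d\mathcal{A}_{\RS}(t)}{dt}\,dt$ and invoke the two preliminary results: the streaming of Lemma \ref{Propa2} and the Cauchy datum of Lemma \ref{Propa3}. The decisive feature — engineered by the choice of constants \eqref{eq:choice} — is that the integrand $\frac{d\mathcal{A}_{\RS}}{dt} = -\frac{\beta}{2}\bar m^2 - \frac{\beta\alpha}{2}(\bar p_{11}\bar K - \bar p_{12}\bar q_{12})$ is \emph{independent of} $t$, so the integral collapses to the integrand itself and the assembly is purely algebraic. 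At $t=1$ the action \eqref{GuerrA} reduces to the integral representation \eqref{eq:integral} of the model, so $\mathcal{A}_{\RS}(1) = A_{\RS}(\alpha,\beta,\lambda)$ is exactly the sought quenched pressure.

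Adding the streaming term to $\mathcal{A}_{\RS}(0)$ from \eqref{Cauchy}, I would first check that the diagonal overlap $\bar p_{11}$ drops out: the Cauchy datum contributes $+\frac{\beta\alpha}{2}\bar K \bar p_{11}$ while the streaming contributes $-\frac{\beta\alpha}{2}\bar K \bar p_{11}$, leaving only a $-\frac{\beta\alpha}{2}\bar p_{12}(\bar K - \bar q_{12})$ piece. This cancellation is what makes the final pressure depend on the off-diagonal overlap alone. The last step of part one is to revert to the untransformed order parameters through \eqref{orderparameters}--\eqref{orderparameters_bo}, i.e. to substitute $\bar m = \sqrt{K}\,m$ and $\bar q_{12} = K\,q$ (with $\bar p_{12}=p$, $\bar K=K$, $\bar X=X$). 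Under this substitution $\beta\bar m\sqrt{\bar K}\mapsto \beta m K$, $\bar K-\bar q_{12}\mapsto K(1-q)$, and $\frac{\beta}{2}\bar m^2\mapsto \frac{\beta K}{2}m^2$, so the collected terms reorganize precisely into \eqref{main-res}.

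For the self-consistencies I would regard \eqref{main-res} as a function of the five parameters $(m,q,p,K,X)$ and impose stationarity. The transparent ones are $\partial_X A_{\RS}=0$, which gives $K = 1/\sqrt{1+\lambda X}$ directly from the two $X$-dependent terms, and $\partial_m A_{\RS}=0$, which yields $m = \mathbb{E}_J\tanh(\beta m K + J\sqrt{\beta\alpha p K})$ after the common factor $\beta K$ cancels. The remaining three require care. The overlaps $p$ and $q$ are \emph{conjugate}: differentiating in $q$ produces the expression for $p$, namely $p = \beta K q/[1-\beta K(1-q)]^2$, whereas differentiating in $p$ is \emph{not} identically zero but returns the equation $q = \mathbb{E}_J\tanh^2(\beta m K + J\sqrt{\beta\alpha p K})$ — provided one does \emph{not} substitute $q=\mathbb{E}_J\tanh^2$ prematurely. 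In both of these, and in $\partial_K A_{\RS}=0$, the key technical device is Gaussian integration by parts, $\mathbb{E}_J[J\tanh(a+bJ)] = b\,\mathbb{E}_J[1-\tanh^2(a+bJ)]$, which converts the $J$-weighted averages into $(1-q)$ factors.

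I expect the single heaviest step to be $\partial_K A_{\RS}=0$, which must reproduce $X = m^2 + \alpha p(1-q) + \alpha/[1-\beta K(1-q)]$. Here $K$ enters through five terms, including the logarithm, the ratio $\frac{\alpha}{2}\frac{\beta K q}{1-\beta K(1-q)}$, and the argument of the $\log\cosh$; differentiating the last and using both the $m$- and $q$-self-consistencies together with the integration-by-parts identity above makes the $m^2$ and $p(1-q)$ contributions recombine, after which a short manipulation of the terms $\alpha q/u^2 + \alpha(1-q)/u$ (with $u=1-\beta K(1-q)$) produces the stated $X$-equation. Everything else is bookkeeping; the conjugate pairing of $(p,q)$ and this last simplification are the only genuinely delicate points.
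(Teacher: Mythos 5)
Your proposal follows exactly the paper's route: the sum rule $\mathcal{A}_{\RS}(1)=\mathcal{A}_{\RS}(0)+\int_0^1\dot{\mathcal{A}}_{\RS}\,dt$ with the $t$-independent streaming of Lemma \ref{Propa2} and the Cauchy datum of Lemma \ref{Propa3}, followed by the rescaling $\bar m=\sqrt{\bar K}\tilde m$, $\bar q_{12}=\bar K\tilde q$ and extremization. The extra details you supply — the cancellation of the $\bar p_{11}$ terms, the conjugate pairing of $(p,q)$ under differentiation, and the Gaussian integration by parts needed for the $\partial_K$ equation — are all correct and merely flesh out what the paper dismisses as ``standard.''
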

\begin{proof}
	The explicit expression of the infinite volume limit of the quenched pressure of the ``relativistic'' Hopfield network, in terms of the natural order parameters of the theory, can be obtained via the fundamental theorem of Calculus as
	\beq
	A(\alpha,\beta,\lambda)= \left. \lim_{N \to \infty} \mathcal{A}_{N}(t=1)=\lim_{N \to \infty}\left( \mathcal{A}_{N}(t=0) + \int_{0}^{1}\frac{d\mathcal{A}_{N}}{dt} \right|_{t=t'}dt' \right).
	\label{sumrule}
	\eeq
	By using the sum rule \eqref{sumrule} and taking into account Proposition \ref{Propa2} and Proposition \ref{Propa3}, in the thermodynamic limit we can write
	\bes
	\label{eq:main-res}
	A(\alpha,\beta,\lambda) =&\log2 -\frac{\beta}{2} \bar{m}^2  -\frac{\beta\alpha}{2} \bar p_{12}(K-\bar q_{12}) -\frac{\beta \bar K \bar X}{2}  + \frac{\beta}{\lambda}\sqrt{1+\lambda \bar X}  -\frac{\alpha}{2}\log\big[1-\beta(\bar K-\bar q_{12})\big] \\
	&+\bbE_J \log\cosh\Big[\beta \bar m \sqrt{\bar K} + J \sqrt{\beta\alpha \bar p_{12} \bar K }\Big]+\frac{\alpha}{2}\frac{\beta \bar q_{12}}{1-\beta(\bar K-\bar q_{12})}.
	\ees
Finally, to restore the original order parameters \eqref{orderparameters22}, we perform a rescaling $\bar q_{12}  = \bar{K} \bar{\tilde{q}}$, $\bar m = \sqrt{\bar K} \bar{\tilde{m}}$, therefore obtaining Eq. \ref{main-res} (notice that we are omitting the upper bar and tilde to lighten the notation). The self-consistency equations follow by extremization of the pressure over the order parameters as standard.
\end{proof}
\begin{Remark}
	By forcing $\alpha=0$ in the eq. (\ref{eq:main-res}) we recover the low-storage scenario of the relativistic Hopfield network \cite{Albert1,Notarnicola}, as expected.
\end{Remark}
\begin{Remark}
	As it happens in Mechanics when performing the classical limit of a relativistic theory, here the classical limit of the quenched free energy of the relativistic Hopfield model collapses as well to the classical Hopfield model addressed in Sec. \ref{riscaldamento}.
\end{Remark}


\section{Fluctuation theory, criticality and ergodicity breaking}\label{momentosecondo}

A main contribution of the statistical-mechanics approach to neural networks is the synthesis of their rich phenomenology, as parameters are varied, in terms of a phase diagram. In particular, the Hopfield model is known to exhibit an \emph{ergodic} (E) phase where fast noise prevails and the neuron state is random,  a \emph{spin-glass} (SG) phase where the slow noise due to pattern interference prevails and the neuron state gets stuck in spurious states, and a \emph{retrieval} phase where the system can work as an associative memory and perform pattern recognition; the retrieval phase can be further split into a sub-region (R) where the retrieval state corresponds to a global minimum of the free energy and another sub-region (MR) where the retrieval  state corresponds to a local minimum of the free energy  (see Fig.~\ref{fig:transition}). These regions can be distinguished by looking at the expectation value of the order parameters: in the ergodic region the expectation of all the order parameters is zero, in the spin-glass region the expectation of the Mattis magnetization is zero while the expectation of the overlap is non-null, and in the retrieval region the expectation of the Mattis magnetization and of the overlap is non null.
The transition between the retrieval and the spin-glass phase is first-order, as evidenced by the abrupt drop of the Mattis magnetization, while the transition between the spin-glass phase and the ergodic phase is second-order (or critical), as evidenced by the continuous drop of the overlap and by the divergence of their fluctuations.
In fact, in pairwise models the onset of ergodicity is typically signalled by a critical behavior whose analytical investigation is feasible by a fluctuation analysis. On the other hand, in pure $p$-spin models (e.g., a $p=4$ spin glass) the onset of ergodicity is first-order and the study of this kind of transition is much more challenging.

The relativistic model we are addressing in this paper is neither a pure pairwise model nor a pure $p$-spin model, whence the quest for a deep analysis of ergodicity breaking, which will be addressed hereafter. In order to accomplish this task, we focus on the two-replica overlaps and, once centered them around their expectations, we study the evolution of their variances (suitably amplified by their volumes, see Def. \ref{riscalaggio}) in the space of the tunable parameters $\{ \alpha, \beta, \lambda \}$ to inspect if and where these diverge, eventually  marking the onset of criticality.
To this goal we retain the interpolation defined by the Guerra Generalized Action (\ref{GuerrA}) and we use it to evaluate the expectations of the order parameters fluctuations and correlations at $t=1$. As before, to achieve this result, we start the evaluation at $t=0$ (where a fictitious, but mathematically treatable, environment is experienced by the neurons) and then propagate this solution up to $t=1$ (where the real surrounding is perceived by the neurons).
\begin{Definition}\label{riscalaggio}
Using $(l,m)$ to label replicas, the centered and rescaled overlap fluctuations $\theta_{lm}$ and $\rho_{lm}$ are introduced as
\bes
\theta_{lm}&:=\sqrt{N}\big[q_{lm}-\delta_{lm}Q-(1-\delta_{lm})q\big],\\
\rho_{lm}&:=\sqrt{P}\big[p_{lm}-\delta_{lm}P'-(1-\delta_{lm})p\big],
\ees
where, $Q$ and $q$ are the expectation values of, respectively, $q_{11}$ and $q_{12}$, and, analogously, $P'$ and $p$ are the expectation values of, respectively, $p_{11}$ and $p_{12}$.
\end{Definition}
\begin{Proposition} Given $O$ as a smooth function of overlaps $\{ q_{lm}, p_{lm}\}_{l,m=1,...,s}$ involving $s$ distinct replicas, the following streaming equation holds
\beq
d_{T}  \mv{O}=\frac{1}{2}\sum _ { a , b } ^ { s} \mv {O \cdot g_ { a , b }}- s \sum _ { a = 1 } ^ { s } \mv{O \cdot g _ { a , s + 1 } } +  \frac{s ( s+ 1 )}{2} \mv{ O \cdot g_ { s+ 1 , s + 2 } } -\frac{s}{2} \mv{ O \cdot g_ { s + 1 , s + 1 } },\eeq
\label{eq:fluctstream}
where we posed
$$
g_{a,b} = \theta_{ab} \rho_{ab}
$$
and
$$d_T := \frac{1}{\beta\sqrt{\alpha}} \frac{d}{dt}$$
to simplify notation.
\end{Proposition}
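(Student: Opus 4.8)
The plan is to differentiate the $s$-fold replicated quenched average and reduce every contribution, by Gaussian integration by parts on the three families of quenched randomness ($\boldsymbol\xi$, $\boldsymbol J$, $\boldsymbol{\tilde J}$), to correlators of the centered fluctuations $\theta_{ab},\rho_{ab}$ of Definition \ref{riscalaggio}. Writing $\mv{O}=\EX\,\Omega_t(O)$ with $\Omega_t=\omega_t^{\otimes s}$, and noting that $O$ depends only on overlaps and hence carries no explicit $t$-dependence, the derivative acts solely on the Boltzmann weights. A short computation on the product state gives $\tfrac{d}{dt}\Omega_t(O)=\Omega_t\!\big(O\sum_{a=1}^s\dot h_a\big)-s\,\omega_t(\dot h)\,\Omega_t(O)$, where $\dot h$ is the $t$-derivative of the single-replica exponent in \eqref{GuerrA}. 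The second term encodes the $s$ normalizations $Z_t^{-s}$ and is what forces a fresh cavity replica, labelled $s+1$, to enter; a further integration by parts on its disorder produces the second cavity label $s+2$.

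The core step is the integration by parts itself. Among the summands of $\dot h$, only the $\boldsymbol\xi$-term (prefactor $\propto t^{-1/2}$) and the $\boldsymbol J,\boldsymbol{\tilde J}$-terms (prefactor $\propto (1-t)^{-1/2}$) carry randomness, the pieces in $m_1^2$, $m_1$, $\sum_\mu\tau_\mu^2$ and $K$ being deterministic. Each application of $\EX[\xi_i^\mu(\cdots)]=\EX[\partial_{\xi_i^\mu}(\cdots)]$ pulls down a factor $\sqrt t\,\sqrt{\beta/N}\,\sqrt K\,\sigma_i^{(c)}\tau_\mu^{(c)}$ from some replica $c$, the $\sqrt t$ cancelling the prefactor; summing over $i,\mu$ converts $\sqrt{K^{(c)}K^{(d)}}\sum_i\sigma_i^{(c)}\sigma_i^{(d)}$ into $N q_{cd}$ and $\sum_\mu\tau_\mu^{(c)}\tau_\mu^{(d)}$ into $P\,p_{cd}$, so every $\boldsymbol\xi$-contraction yields $\tfrac{\beta\alpha N}{2}\,q_{cd}p_{cd}$, while the $\boldsymbol J,\boldsymbol{\tilde J}$-contractions yield the single overlaps $q_{cd}$ and $p_{cd}$. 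The replica indices $c,d$ run over $\{1,\dots,s\}$ together with the cavity labels $s+1,s+2$, with signs fixed by whether the derivative falls on a genuine replica, on a normalization, or on the extra single-replica average.

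I would then invoke the constant choices \eqref{eq:choice}: exactly as in Lemma \ref{Propa2}, they are tuned so that the $\boldsymbol J,\boldsymbol{\tilde J}$, $V_\sigma$, $V_\tau$ and $C_m$ contributions combine with the non-quadratic part of the $\boldsymbol\xi$-contraction and annihilate all terms linear in the centered overlaps. Inserting the expansions $q_{cd}=Q\delta_{cd}+q(1-\delta_{cd})+\theta_{cd}/\sqrt N$ and $p_{cd}=P'\delta_{cd}+p(1-\delta_{cd})+\rho_{cd}/\sqrt P$ of Definition \ref{riscalaggio}, the product $q_{cd}p_{cd}$ splits into a constant, two terms linear in $\theta$ or $\rho$, and the genuinely quadratic piece $\theta_{cd}\rho_{cd}/\sqrt{NP}=g_{cd}/(N\sqrt\alpha)$, using $\sqrt{NP}=N\sqrt\alpha$. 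Thus each $\boldsymbol\xi$-contraction contributes $\tfrac{\beta\alpha N}{2}\cdot g_{cd}/(N\sqrt\alpha)=\tfrac{\beta\sqrt\alpha}{2}\,g_{cd}$, which is precisely why the normalization $d_T=\tfrac{1}{\beta\sqrt\alpha}\tfrac{d}{dt}$ renders the leading coefficient equal to $\tfrac12$. The constant and linear remnants vanish after the replica summation: the bookkeeping assigns the weights $\tfrac12\sum_{a,b}^s$, $-s\sum_a$, $\tfrac{s(s+1)}2$, $-\tfrac s2$, and one checks that these kill any index-independent constant, since $\tfrac12 s^2-s^2+\tfrac{s(s+1)}2-\tfrac s2=0$, and likewise the linear terms by the standard overlap-counting cancellation.

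Collecting the surviving quadratic pieces gives exactly the four terms of the statement with $g_{a,b}=\theta_{ab}\rho_{ab}$. I expect the genuine obstacle to lie not in the integration by parts or the final algebra but in the cavity-replica bookkeeping: one must count precisely how often each index pattern is generated when $\partial_t$ hits a genuine replica, a normalization, or the extra average, check that the three disorder families produce mutually consistent signs, and verify that the centering of Definition \ref{riscalaggio} is exactly what cancels the constant and linear remnants so that only the fully quadratic $g_{ab}$ survive.
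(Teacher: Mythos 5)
Your proposal follows essentially the same route as the paper, which in fact omits the detailed proof (deferring to the standard references) and only exhibits the core mechanism: differentiate the $s$-fold replicated average, apply Gaussian integration by parts on the quenched disorder, and represent the factorized Boltzmann averages as new cavity replicas labelled $s+1$, $s+2$. Your additional observations --- the $\tfrac{\beta\alpha N}{2}q_{cd}p_{cd}$ contraction reducing to $\tfrac{\beta\sqrt{\alpha}}{2}g_{cd}$ after centering, the role of the constants \eqref{eq:choice} in killing the linear remnants, and the counting identity $\tfrac{s^2}{2}-s^2+\tfrac{s(s+1)}{2}-\tfrac{s}{2}=0$ for the constants --- are all consistent with the standard bookkeeping the paper points to.
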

We will not report the proof of this proposition as passages are quite lengthy but rather standard (see \cite{Barra-JSP2010,Alemannation1}), rather, to understand the physics underlying this streaming, it is enough to show how the core of these rules can be obtained in general. The simplest (already $s$-replicated) interpolating structure reads as \beq
Z^s(t)=\sum_{\{ \sigma^{(a)} \} }\int d \pi (K^{(a)})\, e^{-\beta \sum_{a=1}^{s}H(\sqrt{K^{(a)}},\sigma^{(a)})}e^{\sqrt{\frac{t}{N}}\sum_{a,i=1}^{s,N}J_i \sqrt{K^{(a)}} \sigma_i^{(a)}}
\eeq
where $d \pi (K^{(a)})$ denotes a generic continuous distribution for $K^{(a)}$, for instance, in (\ref{eq:integral}) the measure is the complex exponential.
Then, the streaming of the generalized expectaction for a generic smooth function $O$ of the overlaps among $s$ replicas is
\bes
  \partial_t \langle O \rangle &= \partial_t \mathbb{E}\frac{\sum_{\sigma^{(a)}}\int d \pi (K^{(a)}) O e^{-\beta \sum_{a=1}^{s}H(\sqrt{K^{(a)}},\sigma^{(a)})}e^{\sqrt{\frac{t}{N}}\sum_{a,i=1}^{s,N}J_i \sqrt{K^{(a)}} \sigma_i^{(a)}}}{\sum_{\sigma^{(a)}} \int d\pi(K^{(a)}) e^{-\beta \sum_{a}^{s}H(\sqrt{K^{(a)}},\sigma^{(a)})}e^{\sqrt{\frac{t}{N}}\sum_{a,i}^{s,N}J_i \sqrt{K^{(a)}} \sigma_i^{(a)}}} \\
  &= \frac{1}{2\sqrt{tN}}\mathbb{E}\sum_{i=1}^N J_i \sum_{a=1}^{s}\left[\Omega_t(O \sqrt{K^{(a)}}\sigma_i^{(a)})- \Omega_t(O)\Omega_t(\sqrt{K^{(a)}}\sigma_i^{(b)})\right]\\
  &= \frac{1}{2\sqrt{tN}}\mathbb{E}\sum_{i=1}^N \partial_{J_i} \sum_{a=1}^{s}\left[\Omega_t(O \sqrt{K^{(a)}}\sigma_i^{(a)})- \Omega_t(O)\Omega_t(\sqrt{K^{(a)}}\sigma_i^{(a)})\right]\\
    &= \frac{1}{2N}\mathbb{E}\sum_{i=1}^{N}\sum_{a,b=1}^{s,s}\Big[\Omega_t(O\sqrt{K^{(a)}K^{(b)}} \sigma_i^{(a)}\sigma_i^{(b)})-2\Omega_t(O \sqrt{K^{(a)}}\sigma_i^{(a)})\Omega_t(\sqrt{K^{(b)}}\sigma_i^{(b)}) + \\
    &\quad \quad\quad \quad+ 2 \Omega_t(O)\Omega_t(\sqrt{K^{(a)}}\sigma_i^{(a)})\Omega_t(\sqrt{K^{(b)}}\sigma_i^{(b)}) - \Omega_t(O)\Omega_t( \sqrt{K^{(a)}K^{(b)}}\sigma_i^{(a)} \sigma_i^{(b)}) \Big].
\ees
The factorized Boltzmann averages can now be represented as averages over distinct replicas, e.g. $\Omega_t(O)\Omega_t(\sqrt{K^{(a)}K^{(b)}}\sigma_i^{(a)}\sigma_i^{(b)})=\Omega_t(\sqrt{K^{(s+1)}K^{(s+2)}}O \sigma_i^{(s+1)}\sigma_i^{(s+2)})$ for $a \neq b$ and the average $N^{-1}\sum_{i}$ then produces overlaps like $g_{s+1,s+2}$.
\newline

%

We are interested in finding, in the $(\alpha, \beta, \lambda)$ space, the critical surface for ergodicity breaking {\em from the high noise limit} (where no correlations persist) we can treat $\theta_{ab},\rho_{ab}$ as Gaussian variables with zero mean (this allows us to apply Wick theorem inside the averages) and we also treat both $\sqrt{K}\sigma_i$ and $z_\mu$ as zero mean random variables (thus all averages involving uncoupled fields are vanishing): this considerably simplifies the evaluation of the critical surface in the $(\alpha, \beta, \lambda)$ space.
\newline
We can thus reduce the analysis of the rescaled overlap fluctuations to
\bes
\mv{ \theta_{12}^2 }_t &= C(t), &\mv{ \theta_{12}\rho_{12} }_t &= D(t), &\mv{ \rho_{12}^2 }_t &= G(t),\\
\mv{ \theta_{11}^2 }_t &= H(t), &\mv{\theta_{11}\rho_{11} }_t &= I(t), &\mv{\rho_{11}^2}_t &= L(t),\\
\mv{\theta_{11}\rho_{22} }_t &= Q(t),&\mv{\theta_{11}\theta_{22} }_t &= R(t), &\mv{\rho_{11}\rho_{22}}_t &= S(t).
\ees
As stated, the strategy is to evaluate $\mv{ \theta_{12}^2 }_{t=1}$ as well as $\mv{ \theta_{12}\rho_{12} }_{t=1}$ and $\mv{ \rho_{12}^2 }_{t=1}$ by calculating their values at $t=0$ (the Cauchy condition) and then propagating the solution up to $t=1$ again via the Fundamental Theorem of Calculus: the surface, in the $(\alpha,\beta,\lambda)$ space, where these quantities diverge marks the onset of criticality and it is depicted by the next
\begin{Theorem}
The ergodic region of the relativistic Hopfield model in the high-storage is delimited by the following critical surface in the $(\alpha,\beta,\lambda)$ space of the tunable parameters
\beq\label{superficiecritica}
\beta_c=\frac{\sqrt{1 + \sqrt{\alpha} \lambda + \alpha \lambda }}{1 +\sqrt{\alpha}}.
\eeq
\end{Theorem}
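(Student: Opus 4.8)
The plan is to track the three \emph{connected} quadratic fluctuation correlations $C(t)=\mvt{\theta_{12}^2}$, $D(t)=\mvt{\theta_{12}\rho_{12}}$ and $G(t)=\mvt{\rho_{12}^2}$ and to locate, in the $(\alpha,\beta,\lambda)$ space, the surface on which they blow up. First I would insert each $O\in\{\theta_{12}^2,\theta_{12}\rho_{12},\rho_{12}^2\}$ (all built on $s=2$ replicas) into the fluctuation streaming equation of Proposition \ref{eq:fluctstream}, with $g_{ab}=\theta_{ab}\rho_{ab}$ and $d_T=\tfrac{1}{\beta\sqrt{\alpha}}\tfrac{d}{dt}$. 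Since the critical surface is approached \emph{from the high-noise side}, I would treat $\theta_{ab}$ and $\rho_{ab}$ as zero-mean Gaussian variables, so that every four-field object generated by the streaming, e.g. $\mvt{\theta_{12}^2\,\theta_{ab}\rho_{ab}}$, factorizes by Wick's theorem into products of the nine pairwise correlations $C,D,G,H,I,L,Q,R,S$. This closes the streaming into a self-contained system of ODEs in $T$, in which the diagonal (self-replica) correlations $H,I,L$ and the cross ones $Q,R,S$ enter as coefficients/sources driving the connected triple $(C,D,G)$, the coupling between the spin channel $\theta$ and the auxiliary channel $\rho$ being carried entirely by $g=\theta\rho$.

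\textbf{Cauchy data at $t=0$.} Next I would compute the initial values. At $t=0$ the Guerra Generalized Action \eqref{GuerrA} reduces to the factorized one-body model already handled in Lemma \ref{Propa3}: all single-site and single-$\tau$ expectations are explicit Gaussian integrals, and the residual $(X,K)$ integration is fixed by saddle point. Evaluating the second moments of $\theta$ and $\rho$ in this product measure yields closed expressions for $C(0),\dots,S(0)$ in terms of $\alpha,\beta,\lambda$ and of the ergodic-point values of the auxiliary parameters $K,X$. These furnish the Cauchy conditions for the ODE system.

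\textbf{Propagation and the critical locus.} I would then propagate the system from $t=0$ to $t=1$ via the fundamental theorem of calculus, exactly as in the sum rule \eqref{sumrule}. The resulting expressions for $(C,D,G)(t{=}1)$ are rational in the model parameters, and the connected correlations diverge precisely where their common denominator vanishes, i.e. where the zero-fluctuation (ergodic) description loses marginal stability; this is the divergence of the spin-glass susceptibility that signals ergodicity breaking. Setting this condition to zero gives a relation among $\beta$, $K$ and $\alpha$. Inserting the ergodic-point form of the self-consistencies \eqref{eq:sc1}--\eqref{eq:sc2}, namely $X=\alpha/(1-\beta K)$ together with $K=1/\sqrt{1+\lambda X}$ at $m=q=p=0$, I expect the condition to collapse to the renormalized classical-Hopfield form $\beta_c K(1+\sqrt{\alpha})=1$, where the $1$ stems from the direct spin channel and the $\sqrt{\alpha}$ from the pattern-noise channel, both dressed by $K$. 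From $\beta_c K=1/(1+\sqrt{\alpha})$ one gets $1-\beta_c K=\sqrt{\alpha}/(1+\sqrt{\alpha})$, hence $X=\sqrt{\alpha}+\alpha$ and $K=1/\sqrt{1+\sqrt{\alpha}\,\lambda+\alpha\lambda}$, so that $\beta_c=[K(1+\sqrt{\alpha})]^{-1}=\frac{\sqrt{1+\sqrt{\alpha}\,\lambda+\alpha\lambda}}{1+\sqrt{\alpha}}$, which is the claimed surface.

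\textbf{Main difficulty.} The delicate part is the bookkeeping that closes the streaming: I must enumerate the Wick contractions correctly, verify that the diagonal correlations $H,I,L$ and the cross terms $Q,R,S$ remain bounded (or decouple) so that $(C,D,G)$ genuinely obeys a self-contained flow, and handle carefully the $\theta$--$\rho$ loop, since it is exactly this loop that manufactures the factor $(1+\sqrt{\alpha})$. The second subtlety is that the coefficients of the system are not bare but renormalized by the self-consistent auxiliary parameters $K,X$ frozen at the ergodic fixed point; carrying this renormalization consistently through the integration, and confirming that the algebra indeed collapses to the single clean condition above rather than to a higher-degree curve, is where most of the care will be required.
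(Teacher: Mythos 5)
Your proposal follows essentially the same route as the paper: the fluctuation streaming equation closed via the Gaussian/Wick treatment from the high-noise side, Cauchy data $C(0)=K^2$, $D(0)=0$, $G(0)=(1-\beta K)^{-2}$ from the one-body $t=0$ measure, propagation to $t=1$, divergence at $\beta_c K(1+\sqrt{\alpha})=1$, and the same final algebra combining this with $K=1/\sqrt{1+\lambda X}$ and $X=\alpha/(1-\beta_c K)$ to obtain the stated surface. The only detail you leave schematic is the explicit integration of the closed system (the paper uses the conserved ratio $C/G$ and the Riccati variable $Y=D+rG$), but your endpoint and reasoning match the paper's proof.
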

\begin{proof}
According to Proposition (\ref{eq:fluctstream}), by inspecting the $t-$evolution of $C(t)=\mv{ \theta_{12}^2 }_t$, $D(t)=\mv{ \theta_{12}\rho_{12} }_t$ and $G(t)=\mv{ \rho_{12}^2 }_t$, we can write down the following system of coupled ODE
\bes
d_T {C}&=2CD, \\
d_T {D}&=D^2+CG,\\
d_T {G}&=2GD.
\label{eq:pde}
\ees
Suitably combining $C$ and $G$ in \eqref{eq:pde} we can write
\beq
d_T \ln \frac{C}{G}=0\implies C(T)=r^2G(T), \quad r^2=\frac{C(0)}{G(0)}.
\eeq
Now we are left with
\bes
d_T{D}&=D^2+r^2G^2,\\
d_T{G}&=2GD.
\label{eq:pde2}
\ees
The trick here is to complete the square by summing $d_T D + r d_T G$, thus obtaining
\bes
d_T Y&=Y^2,\\
Y&=D+rG,\\
d_T{G}&=2G(Y-rG).
\label{eq:pde3}
\ees
The solution of the above system of coupled ODE is trivial and it is given by
\beq
Y(T)=\frac{Y_0}{1- T Y_0},\quad Y_0=D(0)+\sqrt{C(0)G(0)}.
\label{eq:fluctY}
\eeq
Hence we are left with the evaluation of the correlations at $t=0$: namely the Cauchy condition related to the solution coded by eq. (\ref{eq:fluctY}).
To this task we introduce, inside the $\mathcal{A}(t=0)$ term, a one-body generating function of the momenta of $\tau,\sqrt{K}\sigma$ as follows
\bes\label{eq:onebodygenf}
\mathcal{A}(t=0) =\log \sum_{\{ \boldsymbol \sigma \}} \int \calD z \exp\Big[\beta K \sum_{\mu}\tau_{\mu}^2/2 +\sum_i j_i \sqrt{K}\sigma_i + \sum_\mu J_\mu \tau_\mu\Big].
\ees
More precisely, the source fields $j_i$ and $J_\mu$ provide a mathematical tool to generate the average values $\langle (\sqrt{K}\sigma)^P \rangle$ and $\langle z_mu ^P \rangle$ by suitable derivatives of $\calA(t=0)$; for instance, we can generate $\langle \tau_\mu \rangle$ as $\partial_{J_\mu}  \calA(t=0)$ and finally setting $J,j =0$. We stress that this trick is particularly useful because it allows calculating the various mean values necessary to get eq (5.12) by means of simple derivatives of $\calA(t=0)$.
For the sake of clearness, we highlight that the relevant terms in $j,J$ are
\bes
\mathcal{A}(t=0)  \propto \sum_i \log\cosh(j_i \sqrt{K}) + \frac{1}{2(1-\beta K)} \sum_\mu J_\mu^2.
\ees
All the averages needed at $t=0$ can now be calculated simply as its derivatives and {\em after} performing the derivatives by setting $(j=0, J =0)$. This operation leads to
\bes
D(t=0)&=\sqrt{NP}\evalat{\big(\partial_j \mathcal{A}\big)^2\big(\partial_J \mathcal{A}\big)^2}{j,J=0}=0,\\
C(t=0)&=\evalat{\big(\partial^2_j \mathcal{A}\big)^2}{j,J=0}=K^2,\\
G(t=0)&=\evalat{\big(\partial^2_J \mathcal{A}\big)^2}{j,J=0}=(1-\beta K)^{-2}.
\ees
Inserting this result in \eqref{eq:fluctY}, we get
\bes
Y(T)=\frac{\frac{K}{1-\beta K}}{1-T \frac{K}{1-\beta K}}.
\ees
Upon evaluating $Y(T)$ for $T=\beta\sqrt{\alpha} t$ at $t=1$ we obtain
\bes\label{polo}
Y(t=1)=\frac{\frac{K}{1-\beta K}}{1-\beta \sqrt{\alpha} \frac{K}{1-\beta K}}=\frac{K}{1-\beta K(1+ \sqrt{\alpha} )}.
\ees
Since we are interested in obtaining the critical surface where ergodicity breaks down, namely where fluctuations (i.e. $Y(t=1)$) grow arbitrarily large, we can check where the denominator at the r.h.s. of eq. (\ref{polo}) vanishes. This leads to
\bea\label{eq:ergodicityline}
\beta_c &=&\frac{1}{K(1+\sqrt{\alpha})},\\
K &=& \frac{1}{\sqrt{1+\lambda X}},\\
X &=& \frac{\alpha}{1-K\beta_c}.
\eea
The system above can be rearranged explicitly in order to get the critical surface $\beta_c(\alpha,\lambda)$, as stated in the theorem.
\end{proof}
\begin{Remark}
The above expression, for $\lambda=1$ (i.e. in the true relativistic framework), generalizes the standard AGS critical line \cite{Amit} and collapses to the latter in the classical limit as it should.
\end{Remark}
\begin{Remark}
In the standard Hopfield model the critical noise diverges as $T_c:= 1/ \beta_c \sim \sqrt{\alpha}$, as the load $\alpha$ is made larger and larger. 
Conversely, in the relativistic setup, the critical temperature settles on the constant value $T_c:= 1/ \lambda$ for large values of the load $\alpha$. Also notice that, for $\lambda <2$, the critical temperature for ergodicity breaking is a non-monotonous function of $\alpha$: $T_c$ increases as long as $\alpha < (\lambda - 2)^2/2$, then, at $\alpha = (\lambda - 2)^2/2$ it exhibits a maximum, after which the critical temperature slowly decreases towards its asymptotic limit. For $\lambda =2$, the maximum is placed at $\alpha =0$. 
\end{Remark}

Using the fluctuation theory, we were able to highlight a critical behavior and to provide an explicit equation for the critical surface (see eq. \ref{superficiecritica}). Now, we try to get the same result by Taylor-expanding the self-consistency \eqref{main-res} near $q=0$ and $m=0$, in fact, this is a standard procedure in models displaying criticality and here it is used as a consistency check.
By keeping only leading order terms for every order parameter we write
\bea
K &=& \frac{1}{\sqrt{1+\lambda X}}, \\
X &=& \alpha\bar{p}(1-\bar{q}) + \frac{\alpha}{1-K\beta (1-\bar{q})} = \frac{\alpha }{1-\beta  K} + \mathcal O(\bar{q}^2), \\
\bar{q} &=& \mathbb{E}_y \tanh^2\left( y \sqrt{K \alpha \beta \bar{p}} \right) = K\alpha \beta \bar p + \mathcal O(\bar{p}^2) ,\\
\bar{p} &=& \frac{K\beta\bar{q}}{[1-K \beta (1-\bar{q})]^2} = \frac{\beta  K \bar q}{(1-\beta  K)^2} + \mathcal O (\bar{q}^2).
\eea
After some rearrangements we get
\bea
K^2&=& \frac{1-\beta  K}{1-\beta  K+ \alpha\lambda }\\
 \beta  K  &=&\frac{1}{1+ \sqrt{\alpha} }
\eea
whose solution recovers the previous expression for the critical surface, see eq. \eqref{superficiecritica}.
\begin{figure}[tb]
\begin{center}
        \includegraphics[width=0.95\textwidth]{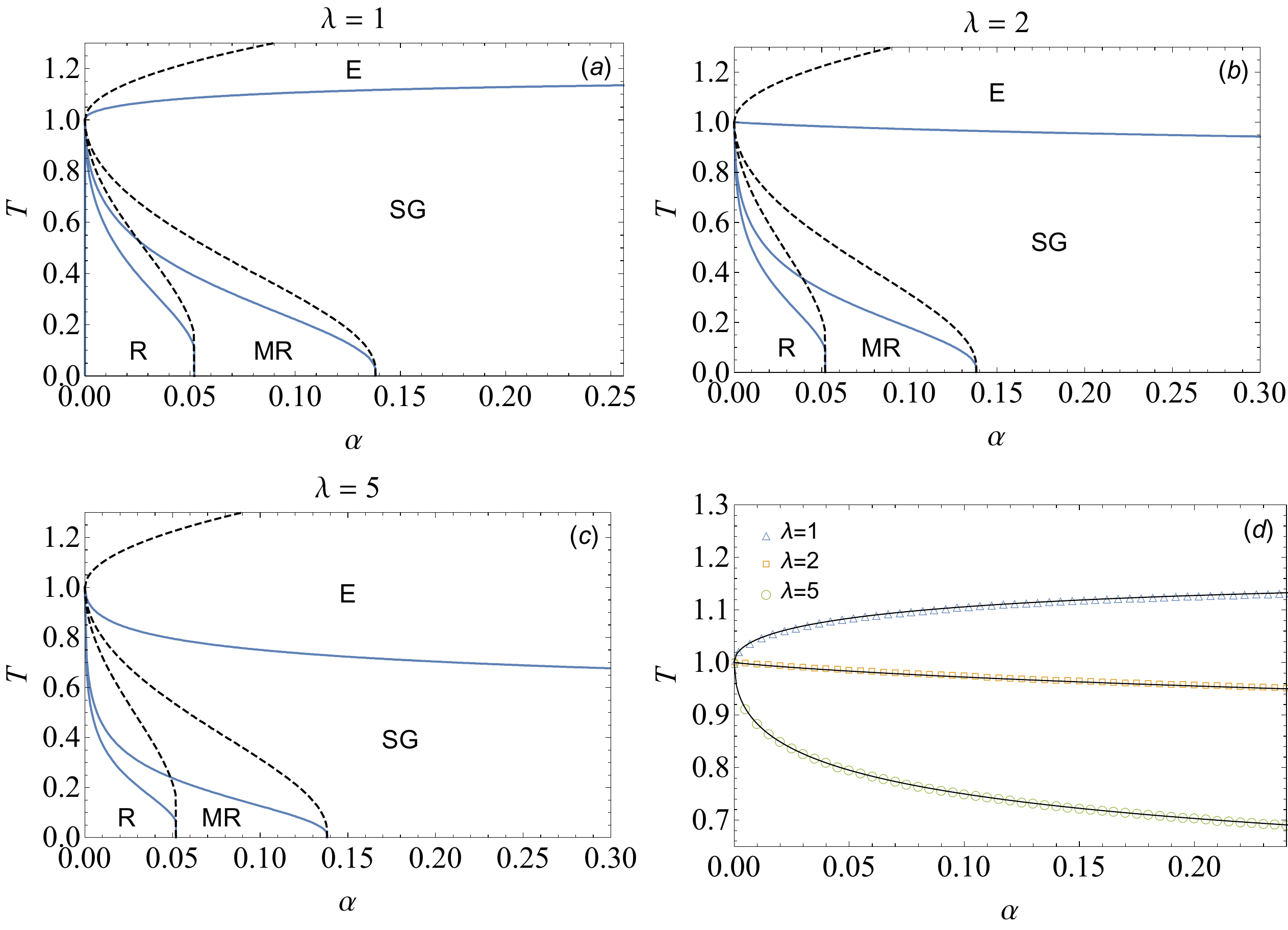}
	\caption{
Phase diagram for the relativistic model for different values of $\lambda$, namely $\lambda=1$ (panel a), $\lambda=2$ (panel b) and $\lambda=5$ (panel c). The phase diagrams are obtained by numerically solving the self-consistency Eqs.~(\ref{eq:sc1})-(\ref{eq:sc1}): the retrieval region (R+MR) is characterized by both non-vanishing $m$ and $q$. The spin-glass phase (SG) is instead given by $m=0$ but non vanishing Edward-Anderson parameter $q$. The distinction between the pure retrieval (R) and mixed retrieval (MR) stands for the fact that, respectively, retrieval and spin-glass states correspond to maxima in the pressure. Finally, in the ergodic region (E) $m=q=0$. Notice that the transition line corresponding to the boundary between the retrieval region (R+MR) and the spin-glass phase (SG) as well as the transition line corresponding to the boundary between the pure retrieval region (R) and the mixed retrieval phase (MR) are close to those found for the classical Hopfield model (black dashed lines) and, as $\lambda$ increases, the retrieval regions shrinks. On the other hand, the critical line signalling the emergence of ergodicity is qualitatively different: the ergodic region engulfs part of the SG region. Notice the, in the $\lambda=1$ phase diagram, we only see the increasing branch of the ergodicity-breaking critical temperature (we decided not to include this behavior in the plot since the critical temperature start to decrease only at $\alpha=1$ for $\lambda=1$, and the retrieval regions would not be transparent by inspecting also the  region). Finally, panel d shows a comparison between the transition lines to ergodicity found with numerical solutions of self-consistency Eqs.~(\ref{eq:sc1})-(\ref{eq:sc1}) (symbols) and the analytical prediction Eq.~\ref{eq:ergodicityline} for different values of $\lambda$ as explained by the legend: the agreement is excellent.}
\label{fig:transition}
\end{center}
\end{figure}
In Fig. \ref{fig:transition}, we reported the phase diagrams of the ``relativistic'' Hopfield model for various values of $\lambda$(=$1$, $2$ and $5$). As a first comment, we see that increasing the parameter $\lambda$, the critical line for the transition to the ergodic region changes convexity, and the ergodic region gets wider towards smaller values of the noise (this features is in common with \cite{Alemannation1}). While the critical storage capacity remains $\alpha_c (\lambda,\beta \to \infty)\sim 0.14$, we see that the corresponding critical line (i.e. the transition to the spin-glass phase) is stretched toward the $\alpha$ axis, meaning that the relativistic model is less stable w.r.t. thermal noise. This can be understood with the fact that the relativistic cost-function presents shallower potential wells w.r.t to the standard Hopfield model, and this feature is amplified for higher and higher values of $\lambda$. This means that less thermal noise is needed for the system to escape from these wells, thus reducing the associative power of the relativistic network.

\section{Conclusions}\label{conclusions}

In this paper we developed mathematical techniques to deal with dense neural networks, possibly without extensively relying upon the statistical mechanics of spin glasses, but rather by exploiting differential-equation theory.
\newline
We first addressed the classical Hopfield network, whose solution was recovered by relying on a {\em mechanical analogy}. The knowledge of statistical mechanics of complex system is rather marginal in this setting and this may trigger further research by applied mathematicians that are more familiar with differential-equation theory but possibly less acquainted with spin glasses.
\newline
In the rest of the paper, we considered the ``relativistic'' Hopfield network as an example of dense network allowing for $p$-body interactions; still by preserving a {\em mechanical setting} we got an explicit expression of its quenched free energy in the high-load regime. We stress that the cost-function of this model is no longer a simple monomial expression of the neurons and the patterns, and handling this cost-function required a non-trivial generalization of the above differential-equation approach.
\newline
At the end of this journey we can finally consider very broad classes of cost functions
\beq
H = \calF\Big[\sum_{\mu=1}^{P} m_{\mu}^2\Big],
\label{HopfieldBare}
\eeq
where $\calF$ is a generic smooth function, and obtain their related pressure (at least under the replica symmetry assumption) and thus all the properties of the model under consideration (e.g., the phase diagram) by following the scheme pursued here.
In fact, for the generic cost function (\ref{HopfieldBare}), the quenched pressure in the infinite volume limit reads as
\bes
A(\alpha,\beta,\lambda) =&\log2 -\frac{\beta K}{2} \bar{m}^2  -\frac{\beta\alpha K}{2}p(1-q) -\frac{\beta K X}{2}  + \frac{\beta}{2}\calF[X] +\frac{\alpha}{2}\frac{\beta K q}{1-\beta K (1-q)} + \\
&+\bbE_y \log\cosh\Big[\beta m K +y \sqrt{\beta\alpha p K }\Big] -\frac{\alpha}{2}\log\big[1-\beta K (1-q)\big],
   \label{main-res2}
\ees
whose extremization returns the following self-consistencies for the order parameters
\bea
K &=&\,\partial_X \calF[X], \\
X &=& \bar{m}^2 + \alpha\bar{p}(1-\bar{q}) + \frac{\alpha}{1-K\beta (1-\bar{q})}, \\
\bar{q} &=& \mathbb{E}_y \tanh^2\left( K \beta \bar{m} + y \sqrt{K \alpha \beta \bar{p}} \right), \\
\bar{p} &=& \frac{K\beta\bar{q}}{(1-K \beta (1-\bar{q}))^2}, \\
\bar{m} &=& \mathbb{E}_y \tanh\left( K \beta \bar{m} + y \sqrt{K \alpha \beta \bar{p}} \right).
\eea
Furthermore, if the leading contribution in (\ref{HopfieldBare}) is provided by pairwise interactions, an ergodic surface is expected and its expression is given by
\beq
\beta =\frac{1}{\left(\sqrt{\alpha }+1\right) \partial_X \calF\left(\alpha +\sqrt{\alpha }\right)}.
\eeq
If $\calF(x)= - Nx/2$ we recover the classical Hopfield scenario, while, if $\calF[x]=-N\sqrt{1+x}$ we recover the relativistic Hopfield scenario, but the method itself can  be applied in full generality to various cost function and can be particularly useful to address dense networks.

\section*{Acknowledgments}

The Authors are grateful to UniSalento, INFN, CNR-Nanotec and Sapienza University for financial support.

\appendix

\section{Relativistic Hopfield solution via replica trick route}\label{app:trick}
In this Appendix, we re-derive the quenched pressure in the thermodynamic limit for the ``relativistic'' Hopfield model by using standard replica trick computations. To do this, we start with representation of the partition function \eqref{eq:ZetaDue}, which we recall here:
\begin{equation}
Z_N (\boldsymbol \xi, \beta, \lambda)= \sum_{ \{ \boldsymbol \sigma \}}\int dXdK\exp\Big(\frac{\beta N}{\lambda}\sqrt{1+\lambda X}-\frac{KX\beta N}2+\frac{K\beta}{2N}\sum_{\mu=1}^P \sum_{i,j=1}^N \xi^\mu_i \xi^\mu _j \sigma_i \sigma_j\Big).
\end{equation}
In the hypotesis of retrieval of a single pattern (say $\boldsymbol {\xi^1}$), we can separate the signal contribution from the noise generated by all the others patterns. Then, the partition function can be rewritten in the form
\begin{equation}
\begin{split}
Z_N (\boldsymbol \xi, \beta, \lambda)= \sum_{\{ \boldsymbol \sigma \} }\int dXdK\exp\Big(\frac{\beta N}{\lambda}\sqrt{1+\lambda X}-\frac{KX\beta N}2&+\frac{K\beta}{2N} \sum_{i,j=1}^N \xi^1_i \xi^1 _j \sigma_i \sigma_j+\\&+\frac{K\beta}{2N}\sum_{\mu>1}^P \sum_{i,j=1}^N \xi^\mu_i \xi^\mu _j \sigma_i \sigma_j\Big).
\end{split}
\end{equation}
We can now linearize the quadratic terms in the $\boldsymbol \sigma$ variables by introducing a set of auxiliary (Gaussian) variables $z_\mu$, so that
\begin{equation}
\begin{split}
Z_N (\boldsymbol \xi, \beta, \lambda)= \sum_{\{ \boldsymbol \sigma \} }&\int dXdK\calD\tau_1\exp\Big(\frac{\beta N}{\lambda}\sqrt{1+\lambda X}-\frac{KX\beta N}2+\sqrt{\frac{K\beta}{N}} \sum_{i=1}^N \xi^1_i \tau_1 \sigma_i \Big)\times\\\times& \int (\prod_{\mu>1} \calD\tau_\mu)\exp\Big(\sqrt{\frac{K\beta}{2N}}\sum_{\mu>1}^P \sum_{i=1}^N \xi^\mu_i \tau_\mu \sigma_i \Big).
\end{split}
\end{equation}
The replica trick approach implies the computation of the quantity $\mathbb E' Z_N^n$, where $n$ is the number of replicas and $\mathbb E'$ is the average over non-recalled patterns, that is
\begin{equation}
\notag
\begin{split}
\mathbb E'Z_N^n&=\mathbb E' \sum_{\boldsymbol \sigma^{(1)}}\dots  \sum_{\boldsymbol \sigma^{(n)}}\int(\prod_{a=1}^n dX^{(a)}dK^{(a)}\calD\tau_1^{(a)})\exp\Big(\frac{\beta N}{\lambda}\sum_a\sqrt{1+\lambda X^{(a)}}-\frac{\beta N}2\sum_a K^{(a)}X^{(a)}\\&+\sqrt{\frac{\beta}{N}}\sum_a \sqrt{K^{(a)}} \sum_{i=1}^N \xi^1_i \tau_1^{(a)} \sigma_i^{(a)} \Big)\times \int (\prod_{\mu>1}\prod_{a=1}^n \calD\tau_\mu^{(a)})\exp\Big(\sqrt{\frac{\beta}{N}}\sum_a \sqrt{K^{(a)}}\sum_{\mu>1}^P \sum_{i=1}^N \xi^\mu_i \tau_\mu^{(a)} \sigma_i^{(a)} \Big),
\end{split}
\end{equation}
where $\calD\tau$ is the usual Gaussian measure. The only term depending on the non-retrieved patterns is the last integral, so we can directly compute the average $\mathbb E'$. Then,
\begin{equation}
\begin{split}
&\int (\prod_{\mu>1}\prod_{a=1}^n \calD\tau_\mu^{(a)})\mathbb E'\exp\Big(\sqrt{\frac{\beta}{N}}\sum_a \sqrt{K^{(a)}}\sum_{\mu>1}^P \sum_{i=1}^N \xi^\mu_i \tau_\mu^{(a)}  \sigma_i^{(a)} \Big)\\&
=\int (\prod_{\mu>1}\prod_{a=1}^n \calD\tau_\mu^{(a)} )\exp\Big(\sum_{\mu>1}\sum_{i=1}^N\log\cosh(\sqrt{\frac{\beta}{N}}\sum_a \sqrt{K^{(a)}}\xi^\mu_i \tau_\mu^{(a)}  \sigma_i^{(a)}) \Big)\\&
\simeq \int (\prod_{\mu>1}\prod_{a=1}^n \calD\tau_\mu^{(a)} )\exp\Big({\frac{\beta}{2N}}\sum_{\mu>1}\sum_{i=1}^N \sum_{a,b=1}^n \sqrt{K^{(a)} K^{(b)}} \tau_\mu^{(a)} \tau_\mu^b \sigma_i^{(a)}\sigma_i^{(b)} \Big),
\end{split}
\end{equation}
where in the last line we used the fact that $\log\cosh x=\frac {x^2}2 +\mathcal O(x^3)$, which is a reasonable approximation since we want to evaluate the thermodynamic limit of the partition function. At this point, we can introduce the overlap order parameters by inserting a Dirac delta in the integral:

\begin{equation}
\notag
\begin{split}
\int (\prod_{\mu>1}\prod_{a=1}^n \calD\tau_\mu^{(a)} )&\exp\Big({\frac{\beta}{2N}}\sum_{\mu>1}\sum_{i=1}^N \sum_{a,b=1}^n \sqrt{K^{(a)} K^{(b)}} \tau_\mu^{(a)} \tau_\mu^{(b)}  \sigma_i^{(a)}\sigma_i^{(b)} \Big)\\ = \int (\prod_{\mu>1}\prod_{a=1}^n \calD\tau_\mu^{(a)} )&\big(\prod_{a,b=1}^n dq_{ab}\delta(q_{ab}-\tfrac1N\sqrt{K^{(a)}K^{(b)}}\sum_{i}\sigma^{(a)}_i \sigma^{(b)}_i)\big)\exp\Big({\frac{\beta}{2N}}\sum_{\mu>1}\sum_{i=1}^N \sum_{a,b=1}^n  \tau_\mu^{(a)} \tau_\mu^{(b)}  q_{ab} \Big)\\
=\int (\prod_{\mu>1}\prod_{a=1}^n \calD\tau_\mu^{(a)} )&\big(\prod_{a,b=1}^n dq_{ab}\frac{Ndp_{ab}}{2\pi}\big)\exp\Big[iN \sum_{a, b}p_{ab}q_{ab}\\&-i\sum_{a,b,i}p_{ab}\sqrt{K^{(a)} K^{(b)}}\sigma^{(a)}_i \sigma^{(b)}_i +{\frac{\beta}{2N}}\sum_{\mu>1}\sum_{i=1}^N \sum_{a,b=1}^n  \tau_\mu^{(a)} \tau_\mu^{(b)}  q_{ab} \Big],
\end{split}
\end{equation}
where in the last line we used the Fourier representation of the Dirac delta by introducing the $q$-conjugated order parameters $p_{ab}$. At this point, we can directly compute the Gaussian integral over the $z$ variables, whose result is
\begin{equation}
\int (\prod_{\mu>1}\prod_{a=1}^n \calD\tau_\mu^{(a)} )\exp\Big({\frac{\beta}{2N}}\sum_{\mu>1}\sum_{i=1}^N \sum_{a,b=1}^n  \tau_\mu^{(a)} \tau_\mu^{(b)}  q_{ab} \Big)=\exp\Big(-\frac{\alpha N}{2}\log \text{det} (\mathbf 1 -\beta \mathbf q)\Big),
\end{equation}
where $\mathbf 1$ is the $n\times n$ identity matrix and $\mathbf q$ is the overlap matrix and we replaced $P-1$ with $\alpha N$, which is possible since we are analyzing the thermodynamic limit. Then, after rescaling $p_{ab}\to \frac{\alpha\beta}{2}p_{ab}$ and $\tau_1^{(a)}\to \sqrt{\beta N} \tau_1^{(a)}$, we have
\begin{equation}
\notag
\begin{split}
\mathbb E'Z_N^n&=\sum_{\boldsymbol \sigma^{(1)}}\dots  \sum_{\boldsymbol \sigma^{(n)}}\int(\prod_{a=1}^n dX^{(a)}dK^{(a)} \sqrt{\frac{\beta N}{2\pi}}d\tau_1^{(a)})\big(\prod_{a,b=1}^n dq_{ab}\frac{i\alpha\beta Ndp_{ab}}{4\pi}\big)\exp\Big(-\frac{\beta N}2 \sum_a (\tau_1^{(a)})^2\\&+\frac{\beta N}{\lambda}\sum_a\sqrt{1+\lambda X^{(a)}}-\frac{\beta N}2\sum_a K^{(a)}X^{(a)}+\beta\sum_a \sqrt{K^{(a)}} \sum_{i=1}^N \xi^1_i \tau_1^{(a)} \sigma_i^{(a)}  -\frac{\alpha \beta N}{2}  \sum_{a,b}p_{ab}q_{ab}\\&+\frac{\alpha\beta}{2}\sum_{a,b,i}p_{ab}\sqrt{K^{(a)} K^{(b)}}\sigma^{(a)}_i \sigma^{(b)}_i-\frac{\alpha N}{2}\log \text{det} (\mathbf 1 -\beta \mathbf q)\Big).
\end{split}
\end{equation}
From now on, we will denote the integration measure over the order parameters simply with $d\mu$ (we stress that the contribution of the prefactors in the integration measures is negligible in the thermodynamic limit). With straightforward computation, it is easy to prove that
\begin{equation}
\begin{split}
\sum_{\boldsymbol \sigma^{(1)}}\dots  \sum_{\boldsymbol \sigma^{(n)}}&\exp\Big(\beta\sum_a \sqrt{K^{(a)}} \sum_{i=1}^N \xi^1_i \tau_1^{(a)} \sigma_i^{(a)}+\frac{\alpha\beta}{2}\sum_{a,b,i}p_{ab}\sqrt{K^{(a)} K^{(b)}}\sigma^{(a)}_i \sigma^{(b)}_i\Big)=\\&
\exp\Big(N \mathbb E \log \sum_{\boldsymbol \sigma}\exp\big(\beta\sum_a \sqrt{K^{(a)}} \xi^1 \tau_1^{(a)} \sigma^{(a)}+\frac{\alpha\beta}{2}\sum_{a,b}p_{ab}\sqrt{K^{(a)} K^{(b)}}\sigma^{(a)} \sigma^{(b)}\big)\Big),
\end{split}
\end{equation}
where we used the fact that, in the thermodynamic limit, $\lim_{N\to \infty}\frac1N f(\boldsymbol\xi_i)= \mathbb E f(\boldsymbol \xi)$. Then, we have the final form
\begin{equation}
\notag
\begin{split}
\mathbb E'Z_N^n&=\int d\mu\exp\Big(-\frac{\beta N}2 \sum_a (\tau_1^a)^2+\frac{\beta N}{\lambda}\sum_a\sqrt{1+\lambda X^{(a)}}-\frac{\beta N}2\sum_a K^{(a)}X^{(a)} -\frac{\alpha \beta N}{2}  \sum_{ab}p_{ab}q_{ab}\\&-\frac{\alpha N}{2}\log \text{det} (\mathbf 1 -\beta \mathbf q)+N \mathbb E \log \sum_{\boldsymbol \sigma}\exp\big(\beta\sum_a \sqrt{K^{(a)}} \xi^1 \tau_1^a \sigma^{(a)}+\frac{\alpha\beta}{2}\sum_{a,b}p_{ab}\sqrt{K^{(a)} K^{(b)}}\sigma^{(b)} \sigma^{(b)}\big)\Big).
\end{split}
\end{equation}
To go forward, we must assume the replica symmetry, which in our case turns out to be
\begin{eqnarray}
K^{(b)}&=&K,\\
X^{(a)}&=&X,\\
\tau_1^{(a)}&=&\sqrt{K}m,\\
q_{ab}&=&K\delta_{ab}+K q(1-\delta_{ab}),\\
p_{ab}&=&P\delta_{ab}+p(1-\delta_{ab}).
\end{eqnarray}
With this ansatz, we get
\begin{equation}
\notag
\begin{split}
\mathbb E'Z_N^n&=\int d\mu \exp\Big[-\frac{\beta Nn}2 K m^2+\frac{\beta Nn}{\lambda}\sqrt{1+\lambda X}-\frac{\beta Nn}2  KX -\frac{\alpha \beta N n}{2}  K p(1-q)\\&-\frac{\alpha N n}{2}\Big(\log[1-\beta K(1-q)]-\frac{\beta K q}{1-\beta K (1-q)}\Big)+nN \int Dy \log 2\cosh(\beta K m+y \sqrt{\alpha\beta K p})\Big].
\end{split}
\end{equation}
Then, we can apply the basic replica trick formula $A=\lim_{N\to \infty,n\to 0} (\mathbb E'Z_N^n-1)/nN$. To do this, we first compute the leading contribution of the partition function in the $N\to \infty$ limit by means of Laplace method, then we expand the exponential up to the order $\mathcal O(n)$. With simple algebra, we arrive to the result
\begin{equation}
\begin{split}
A_{\textrm{RS}}(\alpha,\beta,\lambda)&=-\frac{\beta }2 K m^2+\frac{\beta }{\lambda}\sqrt{1+\lambda X}-\frac{\beta KX }2 -\frac{\alpha \beta }{2}  K p(1-q)-\frac{\alpha }{2}\log[1-\beta K(1-q)]\\&+\frac{\alpha }{2}\frac{\beta K q}{1-\beta K (1-q)}+ \mathbb E_y\log 2\cosh(\beta K m+y \sqrt{\alpha\beta K p}),
\end{split}
\end{equation}
which is precisely the quenched pressure reported in \eqref{main-res}.

\end{document}